\documentclass[10pt,twocolumn,twoside]{IEEEtran}
\usepackage{amsfonts,dsfont,amssymb,amsbsy,amsmath,paralist,bm,ifthen,color}
\usepackage[pdfstartview=FitH,bookmarksnumbered,unicode,bookmarksopen=true]{hyperref}
\usepackage{graphicx}
\usepackage{epstopdf}
\usepackage{amsfonts} 
\usepackage{multirow}
\usepackage{subfigure,relsize}
\usepackage{mathrsfs}
\usepackage{bm}
\usepackage{bbm}
\usepackage{stfloats}
\usepackage{amsthm}
\usepackage{url}
\usepackage[noadjust]{cite}
\usepackage{cases,booktabs}
\usepackage{graphicx,algorithmic,algorithm,relsize}
\usepackage[justification=centering]{caption} 
\usepackage[table]{xcolor}

\newtheorem{lem}{Lemma}
\newtheorem{prop}{Proposition}

\newtheorem{rem}{Remark}

\newtheorem{definition}{Definition}

\newcommand\Cbb{\ensuremath{{\mathbb{C}}}}

\graphicspath{{fig_ga/}}

\definecolor{green}{RGB}{34	195	46}
\definecolor{red}{RGB}{220 0 0}

\usepackage{graphicx} 
\allowdisplaybreaks[4]

\title{Environment-Aware Network-Level Design of Pinching-Antenna Systems--Part II: \\Geometry-Aware Case}

\author{
 Yanqing Xu, \IEEEmembership{Senior Member, IEEE,}
         Zhiguo Ding, \IEEEmembership{Fellow, IEEE,}
         Xiu Yin Zhang, \IEEEmembership{Fellow, IEEE,}\\
         Trung Q. Duong, \IEEEmembership{Fellow, IEEE,}
         and Tsung-Hui Chang, \IEEEmembership{Fellow, IEEE}
         \thanks{\smaller[1] Part of this work has been accepted by IEEE GLOBECOM 2026 \cite{xu2026geometry}.}
         \thanks{\smaller[1] Y. Xu is with the School of Science and Engineering, The Chinese University of Hong Kong, Shenzhen, 518172, China (email: xuyanqing@cuhk.edu.cn).}
         \thanks{\smaller[1] Zhiguo Ding is with the School of Electrical \& Electronic Engineering, Nanyang Technological University, 639798, Singapore  (e-mail: zhiguo.ding@ntu.edu.sg).}
         \thanks{\smaller[1] X. Y. Zhang is with the School of Microelectronics, South China University of Technology, Guangzhou 510000, China, and also with the Pazhou Laboratory, Guangzhou 510330, China (e-mail: eexyz@scut.edu.cn).}
         \thanks{\smaller[1] T. Q. Duong is with the Faculty of Engineering and Applied Science, Memorial University, St. John's, NL A1C 5S7, Canada and also with the School of Electronics, Electrical Engineering and Computer Science, Queen's University Belfast, Belfast, U.K. (e-mail: tduong@mun.ca.)}
         \thanks{\smaller[1] T.-H. Chang is with the School of Artificial Intelligence, The Chinese University of Hong Kong, Shenzhen, 518172, China (email: changtsunghui@cuhk.edu.cn).} 
        }

\date{\today}

\begin{document}

\maketitle
\begin{abstract}
    This two-part paper develops an environment-aware network-level design framework for pinching-antenna systems, complementing existing link-level designs that optimize transmission according to instantaneous user configurations. While link-level optimization provides short-term user-centric adaptation, the proposed network-level framework configures the spatial radiation infrastructure according to longer-term environment information and evaluates performance over the service region. Part I investigates the traffic-aware case, where user presence is characterized statistically by a spatial traffic map and deployments are optimized using traffic-aware network-level metrics. Part II complements Part I by developing geometry-aware, blockage-aware network optimization for pinching-antenna systems in obstacle-rich environments. We introduce a grid-level average signal-to-noise (SNR) model with a deterministic LoS visibility indicator and a discrete activation architecture, where the geometry-dependent terms are computed offline in advance. Building on this model, we formulate two network-level activation problems: (i) average-SNR-threshold coverage maximization and (ii) fairness-oriented worst-grid average-SNR maximization. On the algorithmic side, we prove the coverage problem is NP-hard  and derive an equivalent mixed-integer linear programming reformulation through binary coverage variables and linear SNR linking constraints. To achieve scalability, we further develop a structure-exploiting coordinate-ascent method that updates one waveguide at a time using precomputed per-candidate SNR contributions. For the worst-grid objective, we adopt an epigraph reformulation and leverage the resulting monotone feasibility in the target SNR, motivating a low-complexity bisection-based method with coordinate-wise feasibility search over the discrete candidate set. Simulation results validate the proposed designs and quantify their gains under different environments and system parameters.
\end{abstract}

\begin{IEEEkeywords}
     Pinching antenna, environment-aware design, average-SNR-threshold coverage maximization, worst-grid average-SNR maximization.
\end{IEEEkeywords}

\section{Introduction} 
In addition to link-level communication performance, modern wireless deployments are increasingly evaluated by how well the network serves the whole region. Operators are expected to provide broadly uniform service, avoiding persistent weak-coverage areas and large location-to-location disparities, under practical constraints such as limited site availability, irregular layouts, and evolving usage patterns. This motivates network-level design approaches that treat performance as a spatial quantity and evaluate service quality over a region rather than at a small set of scheduled user locations \cite{niu2011tango,bi2019engineering,xu2025distributed}.
Despite this shift, current network-level planning is still largely built upon fixed-site infrastructures, where base stations (BS) and antenna panels are installed at predetermined locations and the network can mainly adapt through antenna tilt/beam patterns, transmit powers, and scheduling \cite{partov2014utility,buenestado2016self,luo2023srcon,zeng2024tutorial,li2026pemnet,peng2025rf}. While effective in many settings, fixed-site deployments inherently lack spatial agility. Especially, when the environment is irregular (e.g., cluttered indoor spaces, partitioned halls, industrial plants, or dense urban canyons), location-dependent propagation conditions can create persistent weak-coverage regions that are difficult to remedy without costly site densification or infrastructure reconfiguration. In such environments, link-level and network-level design address different but complementary objectives: the former adapts transmission to the currently served users, whereas the latter shapes the spatial service capability of the infrastructure over the entire region.

Recently, flexible-antenna techniques have been proposed to provide
additional spatial degrees of freedom for wireless transmission
\cite{wong2020fluid,zhu2025tutorial,shao20246d,shao20256d}.
In this context, pinching antennas introduce a different network-level degree of freedom by enabling reconfigurable radiation points along spatially long dielectric waveguides \cite{ding2024flexible}. By adjusting where energy is effectively radiated or collected along these structures, the network can shift the transmit/receive locations within the service region in a flexible and lightweight manner. This capability complements classical beam-domain adaptation by providing a new lever in the geometry domain. Instead of only shaping beams from fixed sites, the network can also reshape the physical emission/reception locations to better align with region-wide service objectives. Consequently, pinching antennas open up new opportunities for environment-aware network optimization, enabling deployments that improve region-wide coverage and robustness without relying solely on dense fixed-site installations.

This two-part work aims to develop an environment-aware, network-level design methodology for  pinching-antenna systems by capturing two fundamental drivers of spatial performance variability. Specifically, Part I focuses on the traffic-aware setting, where user presence is modeled statistically through a spatial traffic map and network performance is assessed via traffic-aware region-wide metrics. However, traffic heterogeneity is only one side of the story. Even with identical demand distributions, service quality can differ sharply across locations due to the propagation environment, which determines where reliable links are feasible in the first place.

In many practical scenarios, such as partitioned indoor spaces, warehouses, shopping malls, and industrial floors-walls, partitions, shelves, and other obstacles create strongly geometry-dependent channel conditions. These structures give rise to line-of-sight (LoS) corridors, abrupt shadow boundaries, and persistent coverage holes that cannot be captured by traffic statistics alone. An example is shown in Fig. \ref{fig: system model siso}(a), which shows that the conventional fixed BS cannot guarantee a satisfactory region-wide coverage.
This motivates Part II, which studies geometry-aware network-level pinching-antenna design. Here, ``geometry-aware'' means that the deployment explicitly incorporates blockage-induced LoS/non-line-of-sight (NLoS) transitions (or LoS feasibility) over the service region based on environment information (e.g., obstacle maps), enabling optimization of region-wide service quality without relying on instantaneous user geometry.

\begin{figure*}[!t]
	\centering
	\includegraphics[width=0.99\linewidth]{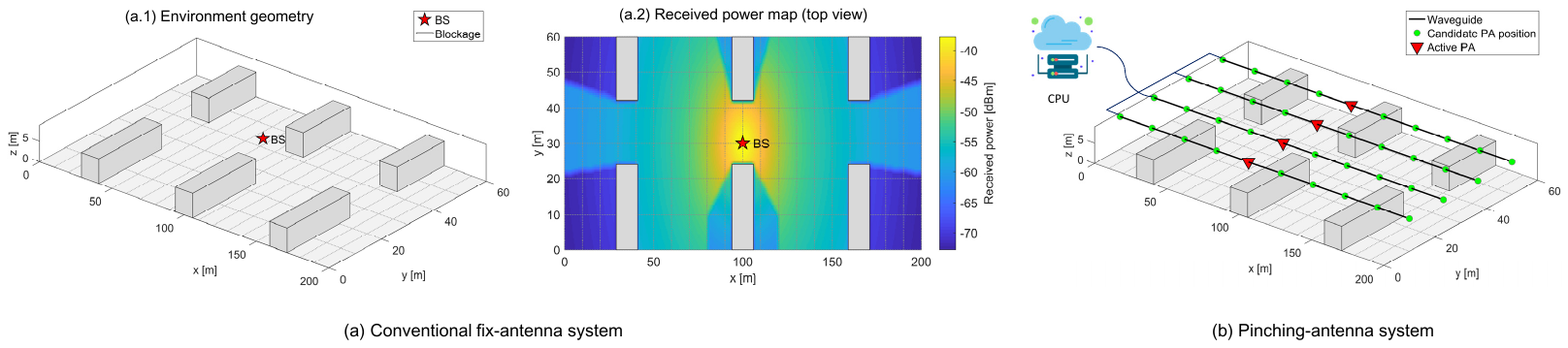}\\
        \captionsetup{justification=justified, singlelinecheck=false, font=small}	
        \caption{Illustration of the considered indoor environment and antenna architectures: (a.1) three-dimensional environment geometry with blockages and a fixed BS at the center; (a.2) corresponding top-view received power map for the conventional fixed-antenna system; (b) pinching-antenna system with ceiling waveguides, candidate pinching antenna (PA) positions, and selected active PAs coordinated by a central processing unit (CPU).} \label{fig: system model siso} \vspace{-0mm}
\end{figure*}

Beyond the numerous studies that focus on LoS-dominated scenarios and demonstrate the benefits of pinching antennas in improving spectral efficiency \cite{xu2025rate,tegos2025minimum}, enhancing energy efficiency \cite{zeng2025energy,zhu2026se}, and reducing outage probability \cite{tyrovolas2025performance,cheng2025performance}, a growing body of work has begun to investigate blockage-aware modeling and optimization for pinching-antenna systems \cite{ding2025blockage,wang2025blockage,li2025power,zhao2026robust,sun2026stochastic,jiang2025spatially,hu2025average,xu2025pinching-nlos,xu2025losblockage,xie2026pinching,ding2025EDMA}.
Specifically, a first line of studies adopted the probabilistic channel model to characterize LoS blockages and NLoS fading and revealed how blockage reshapes performance trends and design insights. Representative examples include analyzing the impact of LoS blockage on link reliability/outage behavior and developing optimization formulations that explicitly incorporate probabilistic LoS availability into pinching-antenna placement and beamforming for given user sets \cite{ding2025blockage,li2025power,zhao2026robust}.
Building on these models, a second line of works developed blockage-aware pinching-antenna system designs under diverse service objectives. For instance, blockage-aware pinching-antenna designs were investigated for simultaneous wireless information and power transfer (SWIPT) by jointly optimizing the pinching-antenna position and the power-splitting ratio to maximize the average SNR subject to energy-harvesting constraints \cite{jiang2025spatially}. In parallel, the interplay between blockage-aware modeling and hardware-level effects (e.g., in-waveguide attenuation) was studied in \cite{xu2025losblockage}, which showed that neglecting in-waveguide attenuation may incur only an insignificant performance loss under typical probabilistic blockage conditions, and proposed a dynamic sample-average-approximation algorithm for stochastic nonconvex multi-user designs.
Beyond purely statistical blockage models, more recent efforts consider geometry-/map-aware blockage to better reflect obstacle-rich indoor venues. A representative work \cite{xie2026pinching} developed a deterministic cylinder-obstacle blockage model to construct explicit LoS/blocked regions and, based on this geometry-aware model, both discrete and continuous pinching-antenna placement were studied, showing that obstacle-aware spatial reconfigurability can improve performance over fixed-antenna baselines. By leveraging the ability of pinching antennas to reconfigure LoS links, an environment-division multiple access (EDMA) technique was proposed in \cite{ding2025EDMA}, illustrating the potential of pinching antennas to support multi-user communications through environment-aware spatial adaptation. Considering practical implementation constraints, recent works have
investigated discrete/two-state pinching-antenna architectures with
pre-installed radiation points, including activation-state selection
\cite{galanopoulou2026viterbi} and the performance impact of realizing continuously
adjustable pinching-antenna positions using a finite number of discrete
candidate antennas \cite{tyrovolas2026many}.
More recently, generalized pinching-antenna systems have been
investigated beyond the conventional dielectric-waveguide implementation \cite{xu2025generalized}.
For example, a leaky-coaxial-cable-based GPA architecture was studied in
\cite{wang2025generalized}, where controllable radiation slots are exploited for flexible
wireless access, while a radio-stripe-based realization was developed in
\cite{xu2026generalized-rs}, where selected antenna processing units are electronically
activated as discrete and reconfigurable radiation points.

Existing blockage-aware and geometry-aware studies have established effective methods for exploiting environment information in link-level pinching-antenna design, typically optimizing antenna positions and transmission parameters for a given user set. These works provide an important foundation for environment-aware pinching-antenna operation. In this paper, we investigate a complementary geometry-aware network-level problem, where the objective is not tied to a particular instantaneous user realization. Instead, the known blockage geometry is used to characterize the spatial service capability over the entire region, and the pinching-antenna configuration is optimized according to region-level metrics such as coverage and worst-location average SNR.

Motivated by the aforementioned discussions, we investigate geometry-aware network-level pinching-antenna design methodologies in obstacle-rich environments under a uniform traffic pattern. The main contributions of Part II are summarized as follows:
\begin{itemize}
    \item {\bf Geometry-aware network-level modeling and metrics: }
     We develop a geometry-aware network-level modeling framework for pinching-antenna systems in blockage-rich environments. The service region is discretized into a grid set, and the propagation environment is represented by a collection of three-dimensional cuboid blockages. To capture geometry-induced LoS/NLoS transitions, we introduce a deterministic LoS visibility indicator for each candidate pinching-antenna position and each grid point, which specifies whether the corresponding LoS path is obstructed by any blockage. Under a discrete activation model, we then derive a tractable per-grid average received-SNR expression whose geometry-dependent terms can be precomputed offline once the blockage layout is given. Leveraging these per-grid quantities, we further define a geometry-aware network-level coverage metric and a worst-grid robustness metric, which serve as the basis for the subsequent network-level optimizations.
    \item {\bf Geometry-aware problem formulation and optimization: }
     Building on a blockage-aware, geometry-dependent per-grid average-SNR model, we formulate two network-level design problems: (i) an average-SNR-threshold coverage maximization that selects the activation matrix to maximize the number of grids whose average SNR exceeds a target, and (ii) a fairness-oriented worst-grid average-SNR maximization that improves the minimum average SNR over the service region. We first show that the coverage maximization problem is NP-hard via a polynomial-time reduction from the Maximum Coverage problem, and derive an equivalent mixed-integer linear programming reformulation by introducing binary coverage variables linked through linear SNR constraints. To obtain scalable solutions, we further develop a structure-exploiting coordinate-ascent method that updates one waveguide at a time using precomputable SNR contributions. For the worst-grid objective, we adopt an epigraph reformulation and leverage the resulting monotone feasibility in the target SNR level to develop a low-complexity bisection-based heuristic, where each feasibility search is implemented efficiently by iteratively updating one waveguide at a time to reduce the total SNR shortfall over all grids.
\end{itemize}
Extensive simulations validate the proposed geometry-aware designs and quantify their gains. The results show that optimized pinching-antenna activation consistently outperforms fixed-array and random baselines in both average-SNR-threshold coverage and worst-grid average SNR, with clear scaling trends versus the SNR threshold, transmit power, NLoS strength, and the numbers of waveguides/candidates. Top-view SNR maps further illustrate distinct objective-driven behaviors: coverage maximization exploits favorable LoS corridors, whereas worst-grid optimization yields a more balanced activation to mitigate geometry-induced dead zones.

The rest of the paper is organized as follows. Section \ref{sec:system_traffic_channel} presents the system model, the geometry-aware LoS/NLoS channel model with explicit blockage layouts, and the proposed network-level performance metrics together with their grid-based formulation. Section \ref{sec:cov_max} formulates the average-SNR-threshold coverage maximization problem and develops a low-complexity coordinate-ascent algorithm. Section \ref{sec:avg_snr_opt} considers a fairness-oriented worst-grid average-SNR maximization design and proposes a bisection-based algorithm with efficient feasibility search via coordinate updates. Section \ref{sec:summary} summarizes and compares the traffic-aware and geometry-aware pinching antenna systems and lists several promising future directions. Section \ref{sec:simulation} provides numerical results and performance discussions. Finally, Section \ref{sec: conclusion} concludes the paper.


\section{System Model, Geometry-Aware Channel, and Performance Metrics}
\label{sec:system_traffic_channel}

In this section, we present the downlink pinching-antenna system with explicit blockages, the geometry-aware channel and signal models for discrete pinching-antenna activation, as well as the network-level performance metrics used for the subsequent optimization.

\subsection{System Model with Blockages and Discrete Candidates}
\label{subsec:system_model}

Consider a rectangular communication area of size $D_x\times D_y$, as shown in Fig. \ref{fig: system model siso}(b). 
A CPU coordinates $N$ dielectric waveguides deployed below the ceiling at height $d_v$, all aligned along the $x$-axis and uniformly spaced along the $y$-axis with spacing $d_h = D_y/(N-1)$ satisfying $d_h \gg \lambda$, where $\lambda$ denotes the carrier wavelength. The index set of waveguides is $\mathcal{N} \triangleq \{1,\ldots,N\}$. The feed point position of the $n$-th waveguide is denoted by
$\boldsymbol{\psi}_{0,n} = \big[0,(n-1)d_h - D_y/2,d_v\big]^{\mathsf T}, \forall n\in\mathcal{N}$.

In this work, we adopt a  discrete pinching-antenna activation model,
which serves as a practical counterpart to the continuous positioning
model studied in Part~I. Specifically, on each waveguide we predefine a finite
set of $M$ feasible pinching-antenna locations, and the design problem is to
select exactly one active candidate per waveguide to optimize the network-level
performance.

Formally, for each waveguide $n\in\mathcal{N}$, we define the candidate index
set $\mathcal{M}\triangleq\{1,\ldots,M\}$ with $x$-coordinates
$\{\widetilde x_{n,1},\ldots,\widetilde x_{n,M}\}\subset[0,D_x]$. The three-dimensional
position of candidate $m\in\mathcal{M}$ on waveguide $n$ is given by
\begin{align}
    \boldsymbol{\psi}^{\mathrm{pin}}_{n,m}
    =
    \big[\widetilde x_{n,m},\,(n-1)d_h - D_y/2,\, d_v\big]^{\mathsf T}.
\end{align}
To indicate which candidate is activated on each waveguide, we introduce binary
variables $a_{n,m}\in\{0,1\}$, where $a_{n,m}=1$ means that the pinching antenna
at $\boldsymbol{\psi}^{\mathrm{pin}}_{n,m}$ is activated and $a_{n,m}=0$
otherwise. Our design activates exactly one pinching antenna per waveguide,
which yields
\begin{align}
    \sum_{m=1}^{M} a_{n,m} = 1,\ \forall n\in\mathcal{N}.
    \label{eq:row_sum}
\end{align}
Collecting $\{a_{n,m}\}$ into a binary activation matrix results in
\begin{align}
    \mathbf{A}
    \triangleq
    \big[a_{n,m}\big]_{n\in\mathcal{N},\,m\in\mathcal{M}}
    \in \{0,1\}^{N\times M},
\end{align}
where each row of $\mathbf{A}$ contains exactly one nonzero entry and specifies the
active pinching-antenna position on the corresponding waveguide. The matrix
$\mathbf{A}$ will serve as the optimization variable in the subsequent
geometry-aware design.

As in Part~I, we discretize the service region into $N_h$ grids along the
$x$-direction and $N_v$ grids along the $y$-direction. The grid sizes are written as
\begin{align}
    \Delta_u \triangleq \frac{D_x}{N_h},\ 
    \Delta_v \triangleq \frac{D_y}{N_v}.
\end{align}
The center of grid $(u,v)$ is given by
\begin{align}
    x_u = \Big(u-\tfrac{1}{2}\Big)\Delta_u, \ 
    y_v = -\tfrac{D_y}{2} + \Big(v-\tfrac{1}{2}\Big)\Delta_v,
\end{align}
for $u=1,\ldots,N_h$ and $v=1,\ldots,N_v$. The representative user position in
grid $(u,v)$ is given by
\begin{align}
    \boldsymbol{\psi}_{u,v} = [x_u, y_v, 0]^{\mathsf T},
\end{align}
and we denote the set of grid indices by
\begin{align}
    \Omega \triangleq
    \{1,\ldots,N_h\}\times\{1,\ldots,N_v\},\ 
    |\Omega| = N_h N_v .
\end{align}

This discrete-activation model offers two advantages in the considered
geometry-aware setting. First, it simplifies control and implementation by allowing each
waveguide to be equipped with a finite number of mechanical or electrical tap
points, and the BS only needs to select one tap per waveguide. The
configuration can thus be encoded by low-rate digital control signals (one
index per waveguide), without requiring continuously tunable positions or fine
mechanical actuation. Second, it turns continuous antenna placement into a
finite selection problem, which enables efficient geometry-aware optimization via
precomputable channel and coverage datasets over the grid set $\Omega$, as will be shown in the following subsections.

\subsection{Geometry-Aware Channel Model}
\label{subsec:channel_model}

We consider $K$ rectangular cuboid blockages (e.g.,
walls or partitions) in the service region. The three-dimensional geometry of
blockage $k\in\mathcal{K}\triangleq\{1,\ldots,K\}$ is modeled as
\begin{align}
    \mathcal{B}_k
    =
    [x_k^{\min}, x_k^{\max}]
    \times
    [y_k^{\min}, y_k^{\max}]
    \times
    [0,H_{\mathrm{blk}}],
\end{align}
where $H_{\mathrm{blk}}<d_v$ denotes the blockage height,
$x_k^{\min},x_k^{\max}\in[0,D_x]$, and
$y_k^{\min},y_k^{\max}\in[-D_y/2,D_y/2]$.

Recall that the service region is discretized into grids indexed by
$(u,v)\in\Omega$, with representative location
$\boldsymbol{\psi}_{u,v}=[x_u,y_v,0]^{\mathsf T}$. For grid $(u,v)$ and a
pinching-antenna candidate $(n,m)$, the complex baseband channel coefficient is
denoted by $h_{n,m}(\boldsymbol{\psi}_{u,v})\in\mathbb{C}$. We adopt a
geometry-aware LoS/NLoS model, where the LoS component is deterministically
determined by the blockage layout.

We introduce a deterministic LoS visibility indicator
$\chi_{n,m}(\boldsymbol{\psi}_{u,v})\in\{0,1\}$, where
$\chi_{n,m}(\boldsymbol{\psi}_{u,v})=1$ means that the LoS path between
candidate $(n,m)$ and grid $(u,v)$ is not obstructed by any blockage, and
$\chi_{n,m}(\boldsymbol{\psi}_{u,v})=0$ otherwise. Accordingly, the channel is
modeled as
\begin{equation}
    h_{n,m}(\boldsymbol{\psi}_{u,v})
    =
    \chi_{n,m}(\boldsymbol{\psi}_{u,v})\,
        h_{n,m}^{\mathrm{LoS}}(\boldsymbol{\psi}_{u,v})
    +
    h_{n,m}^{\mathrm{NLoS}}(\boldsymbol{\psi}_{u,v}),
    \label{eq:hn-decomposition-geo}
\end{equation}
where the term $h_{n,m}^{\mathrm{LoS}}(\boldsymbol{\psi}_{u,v})$ denotes the LoS
component when it exists, while
$h_{n,m}^{\mathrm{NLoS}}(\boldsymbol{\psi}_{u,v})$ represents the NLoS channel
component \cite{3gpp2020channel}.

Following the channel model in \cite{ding2024flexible,xu2025pinching}, the LoS
component can be written as
\begin{equation}
    h_{n,m}^{\mathrm{LoS}}(\boldsymbol{\psi}_{u,v})
    =
    \sqrt{\eta}\,
    \frac{\exp\!\Big(
        -j(\frac{2\pi}{\lambda}\,r_{n,m}(\boldsymbol{\psi}_{u,v})
        +\frac{2\pi}{\lambda_g}\,\widetilde x_{n,m})
    \Big)}
    {r_{n,m}(\boldsymbol{\psi}_{u,v})},
    \label{eq:hn-los-geo}
\end{equation}
where $\lambda$ is the carrier wavelength, $\lambda_g$ is the guided
wavelength, and $\eta=c^2/(4\pi f_c)^2$ with $c$ the speed of light and $f_c$
the carrier frequency. Moreover,
$r_{n,m}(\boldsymbol{\psi}_{u,v}) \triangleq \sqrt{(x_u-\widetilde x_{n,m})^2 + C_{n,v}}$ denotes the distance between grid $(u,v)$ with $C_{n,v}\triangleq \big(y_v - ((n-1)d_h - D_y/2)\big)^2 + d_v^2$.

The NLoS component $h_{n,m}^{\mathrm{NLoS}}(\boldsymbol{\psi}_{u,v})$ captures
the aggregate contribution of scattered multipath. We adopt a Rayleigh fading
model in which the NLoS term is the superposition of $N_c$ independent
scattering clusters
\begin{equation}
    h_{n,m}^{\mathrm{NLoS}}(\boldsymbol{\psi}_{u,v}) =
    \sum_{\ell=1}^{N_c} g_{n,m,\ell}(\boldsymbol{\psi}_{u,v}),
    \label{eq:nlos_channel}
\end{equation}
where $g_{n,m,\ell}(\boldsymbol{\psi}_{u,v})$ denotes the small-scale fading
coefficient associated with the $\ell$-th cluster. We model it as
\begin{equation}
    g_{n,m,\ell}(\boldsymbol{\psi}_{u,v})
    \sim
    \mathcal{CN}\!\left(
        0,\,
        \frac{\mu_\ell^2}{r_{n,m}^2(\boldsymbol{\psi}_{u,v})}
    \right),
    \label{eq:gnl-grid}
\end{equation}
so that each cluster has zero mean and a distance-dependent average power,
where $\mu_\ell^2$ is the average power of cluster $\ell$.

\begin{rem} 
The channel models adopted in Parts I and II correspond to different
levels of available environment information. In Part I, detailed blockage
geometry is assumed unavailable, and the LoS state is therefore modeled
statistically through a Bernoulli random variable with a
distance-dependent LoS probability. In contrast, Part II assumes that the
blockage layout is known, so the LoS state between each candidate
pinching-antenna location and each grid can be explicitly determined from
the environment map through the deterministic visibility indicator
$\chi_{n,m}(\boldsymbol{\psi}_{u,v})$. In both cases, the NLoS component
remains random to capture scattered multipath. Hence, Part I provides a
statistical blockage model for environments without detailed geometry
information, whereas Part II exploits explicit geometry information for
deterministic LoS identification.
\end{rem}

\subsection{Signal Model}
\label{subsec:signal_model}

Given an activation matrix $\mathbf{A}$, the
effective channel from waveguide $n$ to grid $(u,v)$ is given by
\begin{equation}
    \widetilde h_n(\boldsymbol{\psi}_{u,v};\mathbf{A})
    \triangleq
    \sum_{m=1}^{M} a_{n,m}\, h_{n,m}(\boldsymbol{\psi}_{u,v}),
    \ \forall n\in\mathcal{N},\ (u,v)\in\Omega,
    \label{eq:heff_def}
\end{equation}
which is essentially the channel coefficient of the activated candidate on
waveguide $n$ due to $\sum_{m=1}^{M} a_{n,m}=1$. Stacking all waveguides, we
define the effective channel vector as
\begin{equation}
    \widetilde{\boldsymbol{h}}(\boldsymbol{\psi}_{u,v};\mathbf{A})
    \triangleq
    \big[\widetilde h_1(\boldsymbol{\psi}_{u,v};\mathbf{A}),\ldots,
         \widetilde h_N(\boldsymbol{\psi}_{u,v};\mathbf{A})\big]^{\mathsf T}
    \in\mathbb{C}^{N}.
    \label{eq:heff_vec}
\end{equation}

Let $s$ denote the downlink data symbol, satisfying $\mathbb{E}[|s|^2]=1$. The received baseband signal at
grid $(u,v)$ is given by
\begin{align}
    y(\boldsymbol{\psi}_{u,v}) = \sqrt{P}\,
    \widetilde{\boldsymbol{h}}^{\mathsf T}(\boldsymbol{\psi}_{u,v};\mathbf{A})
    \boldsymbol{w}(\boldsymbol{\psi}_{u,v};\mathbf{A})\,s
    + z_{u,v},
    \label{eq:rx_signal}
\end{align}
where $P$ denotes the total transmit power, $\boldsymbol{w}(\boldsymbol{\psi}_{u,v};\mathbf{A}) \in \Cbb^N$ denotes the downlink beamformer, and $z_{u,v}\sim\mathcal{CN}(0,\sigma^2)$ denotes the received additive white Gaussian noise with $\sigma^2$ representing the noise power.

Similar to Part I, the BS employs maximum-ratio transmission (MRT) beamformer based on the effective channel
$\widetilde{\boldsymbol{h}}(\boldsymbol{\psi}_{u,v};\mathbf{A})$, i.e.,
\begin{equation}
    \boldsymbol{w}(\boldsymbol{\psi}_{u,v};\mathbf{A})
    \triangleq
    \frac{\widetilde{\boldsymbol{h}}^{*}(\boldsymbol{\psi}_{u,v};\mathbf{A})}
    {\big\|\widetilde{\boldsymbol{h}}(\boldsymbol{\psi}_{u,v};\mathbf{A})\big\|},
    \
    \big\|\boldsymbol{w}(\boldsymbol{\psi}_{u,v};\mathbf{A})\big\|^2=1,
    \label{eq:mrt_precoder}
\end{equation}
where $(\cdot)^*$ denotes complex conjugation. 
Under MRT, the instantaneous received SNR at grid $(u,v)$ becomes
\begin{align}
    \Gamma(\boldsymbol{\psi}_{u,v};\mathbf{A}) &\triangleq
    \frac{P\Big| \widetilde{\boldsymbol{h}}^{\mathsf T}(\boldsymbol{\psi}_{u,v};\mathbf{A})
    \boldsymbol{w}(\boldsymbol{\psi}_{u,v};\mathbf{A})
    \Big|^2}{\sigma^2} \notag\\
    &= \rho \big\|\widetilde{\boldsymbol{h}}(\boldsymbol{\psi}_{u,v};\mathbf{A})\big\|^2,
    \label{eq:snr_mrt}
\end{align}
where $\rho \triangleq P/\sigma^2$.

\subsection{Per-Grid Average Received SNR}
\label{subsec:avg_snr_discussion}

Under the blockage-aware channel model, the received SNR varies due to
small-scale fading (mainly through the NLoS component). We therefore adopt an
average-SNR-threshold performance metric. 
For the per-grid average received SNR, we have the following lemma
\begin{lem} \label{lem: per-grid snr} 
    Given an activation matrix $\mathbf{A}$, for grid $(u,v)\in\Omega$, the per-grid average received SNR is given by
    \begin{align}
    \bar{\Gamma}(\boldsymbol{\psi}_{u,v};\mathbf{A}) =
        \rho \sum_{n=1}^{N}\sum_{m=1}^{M}
        a_{n,m}\,\bar{G}_{n,m}(\boldsymbol{\psi}_{u,v})
        \label{eq:avg_snr_final}
    \end{align}
    where
    $\bar{G}_{n,m}(\boldsymbol{\psi}_{u,v})
    \triangleq
    \big(\chi_{n,m}(\boldsymbol{\psi}_{u,v})\eta+\mu^2\big)/
    r_{n,m}^2(\boldsymbol{\psi}_{u,v})$
    denotes the average channel power from candidate $(n,m)$ to grid $(u,v)$.
\end{lem}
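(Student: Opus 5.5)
Since each row of $\mathbf{A}$ has exactly one nonzero entry, I would first simplify $\|\widetilde{\boldsymbol{h}}(\boldsymbol{\psi}_{u,v};\mathbf{A})\|^2$ before taking any expectation. For each waveguide $n$, let $m_n$ denote the unique active candidate, so that $a_{n,m_n}=1$ and all other entries in row $n$ vanish. Then $\widetilde h_n=h_{n,m_n}$. Because $a_{n,m}\in\{0,1\}$ with $a_{n,m}^2=a_{n,m}$ and $a_{n,m}a_{n,m'}=0$ for $m\neq m'$, this gives
\begin{align}
|\widetilde h_n(\boldsymbol{\psi}_{u,v};\mathbf{A})|^2=\sum_{m=1}^M a_{n,m}|h_{n,m}(\boldsymbol{\psi}_{u,v})|^2 .
\end{align}
This step removes all cross terms between candidates. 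By \eqref{eq:snr_mrt}, it follows that $\Gamma=\rho\sum_n\sum_m a_{n,m}|h_{n,m}|^2$. Linearity of expectation then reduces the lemma to showing $\mathbb{E}\big[|h_{n,m}(\boldsymbol{\psi}_{u,v})|^2\big]=\bar G_{n,m}(\boldsymbol{\psi}_{u,v})$ for each candidate $(n,m)$. Here $\mathbf{A}$ is deterministic, so it passes through the expectation.

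Next I would expand $|h_{n,m}|^2$ using \eqref{eq:hn-decomposition-geo}:
\begin{align}
|h_{n,m}|^2=\chi_{n,m}^2|h^{\mathrm{LoS}}_{n,m}|^2+|h^{\mathrm{NLoS}}_{n,m}|^2+2\,\mathrm{Re}\big\{\chi_{n,m}h^{\mathrm{LoS}}_{n,m}(h^{\mathrm{NLoS}}_{n,m})^*\big\}.
\end{align}
The first term is deterministic and equals $\chi_{n,m}\eta/r_{n,m}^2$, using $\chi^2=\chi$ and the unit-modulus phase factor in \eqref{eq:hn-los-geo}. The cross term has zero mean, because the LoS part is deterministic and each $g_{n,m,\ell}$ is zero mean by \eqref{eq:gnl-grid}.

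For the NLoS term, the clusters are independent and zero mean, so the cross-cluster expectations vanish and the mean power is $\sum_{\ell}\mu_\ell^2/r_{n,m}^2$. Writing $\mu^2\triangleq\sum_{\ell=1}^{N_c}\mu_\ell^2$ (the paper's $\mu^2$ should be read as this aggregate NLoS power), I obtain $\mathbb{E}|h_{n,m}|^2=(\chi_{n,m}\eta+\mu^2)/r_{n,m}^2=\bar G_{n,m}$. Substituting this back yields \eqref{eq:avg_snr_final}.

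The only delicate point is the first step. The MRT SNR is the squared norm of the effective channel, so no cross-waveguide terms appear. The within-waveguide selection is exact because of the one-hot constraint \eqref{eq:row_sum}; without that constraint, $\|\widetilde{\boldsymbol h}\|^2$ would contain products $a_{n,m}a_{n,m'}h_{n,m}h_{n,m'}^*$. Apart from this, and making the identification of $\mu^2$ explicit, the computation is routine.
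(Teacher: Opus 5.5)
Your proposal is correct and follows essentially the same route as the paper's appendix proof. You use the one-hot rows of $\mathbf{A}$ to reduce $\mathbb{E}\big[\|\widetilde{\boldsymbol{h}}(\boldsymbol{\psi}_{u,v};\mathbf{A})\|^2\big]$ to $\sum_{n}\sum_{m}a_{n,m}\,\mathbb{E}\big[|h_{n,m}(\boldsymbol{\psi}_{u,v})|^2\big]$, then split each term into the deterministic LoS power $\chi_{n,m}\eta/r_{n,m}^2$ and the zero-mean NLoS power $\sum_{\ell}\mu_\ell^2/r_{n,m}^2$, with $\mu^2\triangleq\sum_{\ell}\mu_\ell^2$ exactly as the paper defines it. You are somewhat more explicit than the paper about why the cross terms vanish, namely $a_{n,m}a_{n,m'}=0$ for $m\neq m'$, $\chi_{n,m}^2=\chi_{n,m}$, and independence of the zero-mean clusters.
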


{\emph{Proof:}} See Appendix \ref{appd: per-grid snr}.  \hfill$\blacksquare$

Lemma \ref{lem: per-grid snr} shows that, for each grid $(u,v)$, the average
SNR under activation $\mathbf{A}$ depends on $\{a_{n,m}\}$ only through a
weighted sum. The weights are entirely determined by the environment geometry
(via $\chi_{n,m}(\boldsymbol{\psi}_{u,v})$ and $r_{n,m}(\boldsymbol{\psi}_{u,v})$)
and the channel parameters $(\eta,\mu^2)$. Consequently, once the blockage
layout and the grid are fixed, the terms
$\big(\chi_{n,m}(\boldsymbol{\psi}_{u,v})\,\eta+\mu^2\big)
/r_{n,m}^2(\boldsymbol{\psi}_{u,v})$ can be precomputed offline for all
$(u,v)\in\Omega$, $n\in\mathcal{N}$, and $m\in\mathcal{M}$. These precomputed
SNR maps will be used in the following sections to construct geometry-aware coverage
metrics and formulate the corresponding network optimization problems.

We emphasize that the linear dependence on $\mathbf{A}$ in \eqref{eq:avg_snr_final}
stems from the discrete-activation model, where each waveguide selects exactly
one candidate from a finite set. This structure supports an
offline--online workflow. In particular, once the environment is specified, the
geometry-dependent channel/SNR terms can be precomputed offline, and the
online design reduces to selecting the activation matrix
$\mathbf{A}\in\{0,1\}^{N\times M}$. In what follows, we build on
\eqref{eq:avg_snr_final} to formulate geometry-aware pinching-antenna activation
optimization problems that choose $\mathbf{A}$ to best match the blockage layout
and the desired coverage objectives over the grid set $\Omega$. In particular,
we focus on average-SNR-threshold coverage maximization and worst-grid
average-SNR maximization.

\section{Average-SNR-Threshold Coverage Maximization}
\label{sec:cov_max}

In this section, we study a coverage-oriented pinching-antenna activation
design for the geometry-aware scenario. The key idea is to map the per-grid
average SNR in \eqref{eq:avg_snr_final} to a binary notion of coverage via an
SNR target, and then select the activation matrix $\mathbf{A}$ to maximize the
resulting covered area over the grid set $\Omega$. To this end, we first
introduce the following definition.

\subsection{ Problem Formulation}

\begin{definition}[SNR-threshold coverage]
    Given an SNR threshold $\gamma_{\mathrm{th}}>0$, grid $(u,v)$ is said to be
    \emph{covered} under activation $\mathbf{A}$ if its average received SNR satisfies
    $\bar{\Gamma}(\boldsymbol{\psi}_{u,v};\mathbf{A})\ge \gamma_{\mathrm{th}}$.
    Accordingly, we define the binary coverage indicator as
    \begin{equation}
        c_{u,v}(\mathbf{A})
        \triangleq
        \mathbbm{1}\!\left\{
            \bar{\Gamma}(\boldsymbol{\psi}_{u,v};\mathbf{A})
            \ge \gamma_{\mathrm{th}}
        \right\},
        \ (u,v)\in\Omega,
        \label{eq:coverage_indicator}
    \end{equation}
    where $\mathbbm{1}\{\cdot\}$ denotes the indicator function.
\end{definition}

With the above definition, we can now quantify the covered area over the grid
set $\Omega$ by counting the number of covered grids. Accordingly, the
average-SNR-threshold coverage maximization problem is formulated as
\begin{subequations}
\begin{align}
    \max_{\mathbf{A}}
    \quad &
    \sum_{(u,v)\in\Omega} c_{u,v}(\mathbf{A}) \label{eqn:cov_max_obj}\\
    \text{s.t.}\quad &
    \sum_{m=1}^{M} a_{n,m} = 1,\ \forall n\in\mathcal{N},\\
    &
    a_{n,m}\in\{0,1\},\ \forall n\in\mathcal{N},\ m\in\mathcal{M}.
\end{align}
\label{prob:cov_max}
\end{subequations}
Here, the objective maximizes the total number of grids whose average received
SNR exceeds the target $\gamma_{\mathrm{th}}$, while the constraints enforce
the one-hot activation of a single candidate on each waveguide.

It is challenging to solve the Problem~\eqref{prob:cov_max} because the objective involves the
discontinuous indicator $c_{u,v}(\mathbf{A})$. Fortunately, thanks to the
linear form of $\bar{\Gamma}(\boldsymbol{\psi}_{u,v};\mathbf{A})$ in
\eqref{eq:avg_snr_final} and the one-hot activation constraints, the problem
admits a clean mixed-integer linear programming (MILP) reformulation, as well
as a scalable structure-exploiting algorithm.

\subsection{Exact MILP Reformulation}
To handle the discontinuous coverage indicator, we introduce auxiliary binary
variables $\{c_{u,v}\in\{0,1\}\}_{(u,v)\in\Omega}$, where $c_{u,v}=1$ indicates
that grid $(u,v)$ is covered. The coverage condition
$\bar{\Gamma}(\boldsymbol{\psi}_{u,v};\mathbf{A})\ge \gamma_{\mathrm{th}}$ can be
linked to $c_{u,v}$ through the implication
\[
c_{u,v}=1 \ \Rightarrow\  \bar{\Gamma}(\boldsymbol{\psi}_{u,v};\mathbf{A})\ge\gamma_{\mathrm{th}},
\]
which is enforced by the linear constraint
\begin{equation}
    \bar{\Gamma}(\boldsymbol{\psi}_{u,v};\mathbf{A})
    \ge
    \gamma_{\mathrm{th}}\,c_{u,v},\ \forall (u,v)\in\Omega.
    \label{eq:cov_milp_link}
\end{equation}
Importantly, since $\bar{\Gamma}(\boldsymbol{\psi}_{u,v};\mathbf{A}) \geq 0$, \eqref{eq:cov_milp_link} does not require any big-$M$ constant.
The following lemma shows that the resulting formulation is equivalent to the
original coverage-count objective.

\begin{lem}
For any fixed activation matrix $\mathbf{A}$, an optimal choice of
$\{c_{u,v}\}$ under \eqref{eq:cov_milp_link} is given by
\begin{equation}
    c_{u,v}^\star
    =
    \mathbbm{1}\!\left\{
        \bar{\Gamma}(\boldsymbol{\psi}_{u,v};\mathbf{A})\ge\gamma_{\mathrm{th}}
    \right\},
    \ \forall (u,v)\in\Omega.
\end{equation}
Consequently, maximizing $\sum_{(u,v)\in\Omega} c_{u,v}$ subject to
\eqref{eq:cov_milp_link} is equivalent to maximizing the covered-grid count
$\sum_{(u,v)\in\Omega}\mathbbm{1}\{\bar{\Gamma}(\boldsymbol{\psi}_{u,v};\mathbf{A})\ge\gamma_{\mathrm{th}}\}$.
\end{lem}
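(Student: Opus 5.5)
The proof separates into a per-grid argument for fixed $\mathbf{A}$, followed by a lifting to the joint problem over $(\mathbf{A},\{c_{u,v}\})$. First, fix an activation matrix $\mathbf{A}$ satisfying \eqref{eq:row_sum} and the binary constraints. Both the objective $\sum_{(u,v)\in\Omega} c_{u,v}$ and the constraints \eqref{eq:cov_milp_link} separate across grids, since each $c_{u,v}$ appears only in its own constraint. Hence the inner maximization over $\{c_{u,v}\}$ decouples into $|\Omega|$ independent problems of the form $\max\{c_{u,v}\in\{0,1\}: \bar{\Gamma}(\boldsymbol{\psi}_{u,v};\mathbf{A})\ge \gamma_{\mathrm{th}} c_{u,v}\}$.

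Each scalar problem is handled by cases. By Lemma~\ref{lem: per-grid snr}, $\bar{\Gamma}(\boldsymbol{\psi}_{u,v};\mathbf{A})=\rho\sum_{n,m}a_{n,m}\bar{G}_{n,m}(\boldsymbol{\psi}_{u,v})\ge 0$, because $\rho>0$, $\eta,\mu^2\ge0$, $\chi_{n,m}\in\{0,1\}$, and $a_{n,m}\ge 0$. Consequently $c_{u,v}=0$ is always feasible. If $\bar{\Gamma}(\boldsymbol{\psi}_{u,v};\mathbf{A})\ge\gamma_{\mathrm{th}}$, then $c_{u,v}=1$ is feasible and is therefore optimal. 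If $\bar{\Gamma}(\boldsymbol{\psi}_{u,v};\mathbf{A})<\gamma_{\mathrm{th}}$, then $c_{u,v}=1$ violates \eqref{eq:cov_milp_link}, so only $c_{u,v}=0$ is feasible. In both cases the optimal value is $c^\star_{u,v}=\mathbbm{1}\{\bar{\Gamma}(\boldsymbol{\psi}_{u,v};\mathbf{A})\ge\gamma_{\mathrm{th}}\}$, which proves the first claim. It also gives the inner optimal value $\sum_{(u,v)} c_{u,v}(\mathbf{A})$ with $c_{u,v}(\mathbf{A})$ as in \eqref{eq:coverage_indicator}.

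For the equivalence, write the joint MILP as $\max_{\mathbf{A}}\max_{\{c_{u,v}\}}$ and substitute the inner value just computed. The outer problem then becomes exactly \eqref{prob:cov_max}, so the two problems share the same optimal value. Moreover, if $(\mathbf{A}^\star,\{c^\star_{u,v}\})$ solves the MILP, then $\mathbf{A}^\star$ solves \eqref{prob:cov_max}. Conversely, any optimizer $\mathbf{A}$ of \eqref{prob:cov_max}, paired with the indicator choice, solves the MILP.

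There is no real obstacle; the only point needing care is nonnegativity of $\bar{\Gamma}$, which guarantees that $c_{u,v}=0$ is always feasible and that no big-$M$ term is needed.
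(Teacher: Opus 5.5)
Your proof is correct and follows the reasoning the paper relies on. The paper gives no separate proof of this lemma: it rests on the implication $c_{u,v}=1\Rightarrow\bar{\Gamma}(\boldsymbol{\psi}_{u,v};\mathbf{A})\ge\gamma_{\mathrm{th}}$ and on the nonnegativity $\bar{\Gamma}\ge 0$, which makes $c_{u,v}=0$ always feasible and removes the need for a big-$M$ term, and your per-grid decoupling, case split, and lifting to the joint problem make that argument explicit.
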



With the introduced auxiliary variables $\{c_{u,v}\}$ and the linear linking
constraints \eqref{eq:cov_milp_link}, Problem \eqref{prob:cov_max} is
equivalently reformulated as the following MILP
\begin{subequations}
\begin{align}
    \max_{\mathbf{A},\,\{c_{u,v}\}}
    \quad &
    \sum_{(u,v)\in\Omega} c_{u,v} \label{eq:23a}\\
    \text{s.t.}\quad &
    \bar{\Gamma}(\boldsymbol{\psi}_{u,v};\mathbf{A})
    \ge
    \gamma_{\mathrm{th}}\,c_{u,v},
    \ \forall (u,v)\in\Omega, \label{eq:23b}\\
    &
    \sum_{m=1}^{M} a_{n,m} = 1,\ \forall n\in\mathcal{N}, \label{eq:23c} \\
    &
    a_{n,m}\in\{0,1\},\ \forall n\in\mathcal{N},\ m\in\mathcal{M}, \label{eq:23d} \\
    &
    c_{u,v}\in\{0,1\},\ \forall (u,v)\in\Omega, \label{eq:23e}
\end{align}
\label{prob:cov_milp}
\end{subequations} %
Compared with the original formulation \eqref{prob:cov_max}, the MILP reformulation
\eqref{prob:cov_milp} removes the discontinuous indicator function from the objective by
introducing auxiliary binary coverage variables and linear linking constraints. As a result,
\eqref{prob:cov_milp} becomes a standard mixed-integer linear program that can be solved by
off-the-shelf MILP solvers when the grid size $|\Omega|$ and the candidate count $M$ are
moderate. Nevertheless, as stated in Proposition~\ref{prop:NP_hard_MILP_coverage},
problem \eqref{prob:cov_milp} is still NP-hard in general and may incur prohibitive complexity
when $|\Omega|$ and/or $M$ is large.

\begin{prop}
\label{prop:NP_hard_MILP_coverage}
Problem \eqref{prob:cov_milp} is NP-hard.
\end{prop}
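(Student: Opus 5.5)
The plan is a polynomial-time reduction from the decision version of the Maximum Coverage problem, which is NP-complete. By Lemma~\ref{lem: per-grid snr}, the average SNR of every grid is the linear function $\rho\sum_{n,m}a_{n,m}\bar{G}_{n,m}(\boldsymbol{\psi}_{u,v})$. So, once the offline data are fixed, an instance of Problem~\eqref{prob:cov_milp} is fully specified by $N$, $M$, $\Omega$, $\rho$, $\gamma_{\mathrm{th}}$ and the nonnegative array $\{\bar{G}_{n,m}(\boldsymbol{\psi}_{u,v})\}$. I will view Problem~\eqref{prob:cov_milp} as the class of such instances with these gains supplied as input data. Since the equivalence lemma above identifies its optimal value with the covered-grid count of Problem~\eqref{prob:cov_max}, it suffices to show that maximizing the covered-grid count is NP-hard.

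Start from a Maximum Coverage instance: a universe $\mathcal{U}=\{e_1,\ldots,e_J\}$, a family of subsets $S_1,\ldots,S_L\subseteq\mathcal{U}$, a budget $k$ and a target $T$. The question is whether some $k$ subsets cover at least $T$ elements. I will build the following instance.
\begin{enumerate}
\item Take $N=k$ waveguides, $M=L$ candidates per waveguide (candidate $m$ plays the role of set $S_m$), and $|\Omega|=J$ grids (grid $j$ plays the role of element $e_j$).
\item Set $\rho=1$ and define $\bar{G}_{n,m}(e_j)=1$ if $e_j\in S_m$ and $\bar{G}_{n,m}(e_j)=\epsilon$ otherwise, for every waveguide $n$. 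The constant $0\le\epsilon<1/k$ models the nonzero NLoS floor $\mu^2/r^2$, which mirrors the structure $(\chi\eta+\mu^2)/r^2$ with $\chi\in\{0,1\}$.
\item Set $\gamma_{\mathrm{th}}=1$.
\end{enumerate}
The construction has size $O(kLJ)$, so it is polynomial.

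Next I check correctness. If the chosen sets all miss element $e_j$, then grid $j$ receives at most $k\epsilon<1$ and is not covered. If some chosen candidate contains $e_j$, the grid receives at least $1$ and is covered. Hence, for any feasible $\mathbf{A}$, the number of covered grids equals $|\bigcup_{n}S_{m(n)}|$, where $m(n)$ is the candidate selected on waveguide $n$. Two waveguides may select the same candidate. This only means that at most $k$ distinct sets are chosen, and since coverage is monotone this does not change the optimal value: any choice of at most $k$ distinct sets can be padded with repeats to a feasible $\mathbf{A}$, and conversely every feasible $\mathbf{A}$ selects at most $k$ sets. Therefore the optimal value of Problem~\eqref{prob:cov_milp} is at least $T$ if and only if the Maximum Coverage instance is a yes-instance, which establishes NP-hardness.

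The step I expect to be the main obstacle is justifying that the abstract gains $\bar{G}$ are admissible inputs, since a physical geometry might not realize an arbitrary 0/1 visibility pattern with equal distances. I will handle this by stating the hardness for the optimization class defined by the precomputed data, which is exactly how Problem~\eqref{prob:cov_milp} is posed. The gap $k\epsilon<1$ shows that the argument survives any bounded NLoS floor. If a geometric realization were insisted upon, I would instead rescale the threshold relative to the distance-dependent gains, but I would not pursue that in the main argument.
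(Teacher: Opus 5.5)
Your proposal is correct and follows essentially the same route as the paper: a polynomial-time reduction from Maximum Coverage with grids as elements, $N=k$ waveguides, candidates as sets, and membership-encoding gains relative to $\gamma_{\mathrm{th}}$, with repeated selections absorbed by monotonicity of coverage. The only differences are cosmetic. You reduce from the decision version rather than the exact-$k$ optimization variant, and you add an $\epsilon<1/k$ NLoS floor, which the $k\epsilon<1$ gap handles. Like the paper (see its footnote), you treat the gains as abstract input data rather than a physically realized geometry.
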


\emph{Proof:} See Appendix \ref{appd:NP_hard_MILP_coverage}. 
\hfill $\blacksquare$

The NP-hardness of \eqref{prob:cov_milp} implies that globally solving the problem may become computationally prohibitive for large network instances. This motivates us to devise a low-complexity algorithm to obtain high-quality solutions with scalable computational cost.

\subsection{A Structure-Exploiting Coordinate-Ascent Algorithm}
\label{subsec:cov_coord_algo}

While Problem~\eqref{prob:cov_milp} can be solved by off-the-shelf MILP solvers
(e.g., via branch-and-bound) for moderate problem sizes, its computational
burden may become prohibitive when the grid set $\Omega$ is dense or the number
of candidates $M$ is large \cite{wolsey1999integer}. We therefore
develop a low-complexity algorithm that exploits two key structures: (i) the
one-hot constraint $\sum_{m=1}^{M}a_{n,m}=1$ for each waveguide $n$, and (ii)
the additive form of the per-grid average SNR in \eqref{eq:avg_snr_final}.

Given an activation matrix $\mathbf{A}$, let $m_n$ denote the
currently activated candidate on waveguide $n$, i.e., $a_{n,m_n}=1$. For each
grid $(u,v)$, define the average SNR excluding waveguide $n$ as
\begin{equation}
    \bar{\Gamma}_{-n}(\boldsymbol{\psi}_{u,v};\mathbf{A})
    \triangleq
    \bar{\Gamma}(\boldsymbol{\psi}_{u,v};\mathbf{A})
    - \rho\,\bar{G}_{n,m_n}(\boldsymbol{\psi}_{u,v}),
    \ (u,v)\in\Omega.
    \label{eq:residual_snr_cov_nosymbol}
\end{equation}
If waveguide $n$ switches to candidate $m$, the resulting average SNR at grid
$(u,v)$ becomes
\begin{align}
    \bar{\Gamma}(\boldsymbol{\psi}_{u,v};\,\mathbf{A}_{n\to m})
    &= \bar{\Gamma}_{-n}(\boldsymbol{\psi}_{u,v};\mathbf{A})\notag\\
    &\quad + \rho\,\bar{G}_{n,m}(\boldsymbol{\psi}_{u,v}),
    \ (u,v)\in\Omega,
    \label{eq:snr_after_switch}
\end{align}
where $\mathbf{A}_{n\to m}$ denotes the activation matrix obtained by replacing
row $n$ of $\mathbf{A}$ with $a_{n,m}=1$.

Therefore, when updating waveguide $n$, we can enumerate $m\in\mathcal{M}$ and
select the candidate that yields the largest number of covered grids
\begin{small}\begin{align}
    m_n^\star
    \in
    \arg\max_{m\in\mathcal{M}}
    \sum_{(u,v)\in\Omega}
    \mathbbm{1}\!\left\{
        \bar{\Gamma}_{-n}(\boldsymbol{\psi}_{u,v};\mathbf{A})
        + \rho\,\bar{G}_{n,m}(\boldsymbol{\psi}_{u,v})
        \ge \gamma_{\mathrm{th}}
    \right\}.
    \label{eq:best_m_cov_update_nosymbol}
\end{align}
\end{small} %

\noindent To improve stability, when multiple candidates achieve the same covered-grid
count, we use the following tie-breaker: choose the candidate that maximizes
the total SNR margin above the threshold,
\begin{equation}
    \sum_{(u,v)\in\Omega}
    \Big[
        \bar{\Gamma}_{-n}(\boldsymbol{\psi}_{u,v};\mathbf{A})
        + \rho\,\bar{G}_{n,m}(\boldsymbol{\psi}_{u,v})
        - \gamma_{\mathrm{th}}
    \Big]^+,
    \label{eq:tiebreak_margin_nosymbol}
\end{equation}
where $[x]^+\triangleq\max\{x,0\}$.

Starting from any feasible $\mathbf{A}^{(0)}$, we iterate over $n=1,\ldots,N$ and
update one waveguide at a time using \eqref{eq:best_m_cov_update_nosymbol}
(until no further improvement is observed). Each update optimizes the coverage
objective over the finite candidate set of the current waveguide while fixing
all other waveguides, and hence the covered-grid count is nondecreasing across
iterations. Since the feasible set is finite, the algorithm terminates after a
finite number of iterations.


Each waveguide update evaluates $M$ candidates over $|\Omega|$ grids, leading to
a per-iteration complexity of $\mathcal{O}(NM|\Omega|)$. Since the quantities
$\{\bar{G}_{n,m}(\boldsymbol{\psi}_{u,v})\}$ are precomputable offline, the
online computation consists of simple additions and threshold comparisons.

We note that, as a coordinate-wise method for a discrete nonconvex
problem, the resulting solution may depend on the initial activation
matrix. This sensitivity can be alleviated through a multi-start
implementation, where the algorithm is executed from several feasible
initializations and the best converged solution is retained.

\section{Worst-Grid Average-SNR Maximization}
\label{sec:avg_snr_opt}

In this section, we consider a fairness-oriented design objective. Specifically,
we seek to enhance the weakest grid in the service region by maximizing the
minimum per-grid average received SNR over $\Omega$. This criterion is
particularly relevant in blockage-aware environments, where some grids may
otherwise experience persistently low SNR due to unfavorable LoS visibility.

\subsection{Problem Formulation}
\label{subsec:wg_formulation}

Recall that $\bar{\Gamma}(\boldsymbol{\psi}_{u,v};\mathbf{A})$ denotes the
average received SNR at grid $(u,v)$ under activation $\mathbf{A}$. The
worst-grid average-SNR maximization problem can be formulated as
\begin{subequations}
    \begin{align}
        \max_{\mathbf{A}}
        \quad &
        \min_{(u,v)\in\Omega}\ \bar{\Gamma}(\boldsymbol{\psi}_{u,v};\mathbf{A}) \\
        \text{s.t.}\quad &
        \sum_{m=1}^{M} a_{n,m} = 1,\ \forall n\in\mathcal{N},\\
        &
        a_{n,m}\in\{0,1\},\ \forall n\in\mathcal{N},\ m\in\mathcal{M}.
    \end{align}\label{prob:wg_orig}
\end{subequations}%
To obtain an equivalent mixed-integer linear formulation, we introduce an
auxiliary variable $t$ representing the worst-grid average SNR. Then, problem
\eqref{prob:wg_orig} can be equivalently rewritten in epigraph form as
\begin{subequations}
    \begin{align}
        \max_{\mathbf{A},\,t}
        \quad & t \\
        \text{s.t.}\quad &
        \bar{\Gamma}(\boldsymbol{\psi}_{u,v};\mathbf{A}) \ge t,
        \qquad \forall (u,v)\in\Omega,\\
        & \sum_{m=1}^{M} a_{n,m} = 1,\ \forall n\in\mathcal{N},\\
        & a_{n,m}\in\{0,1\},\ \forall n\in\mathcal{N},\ m\in\mathcal{M}.
    \end{align}
    \label{prob:wg_mip}
\end{subequations}
Although \eqref{prob:wg_mip} removes the inner minimization operator, it remains
challenging because it is an MILP involving binary variables
$\mathbf{A}$ and a continuous SNR variable $t$. To develop a structured solution,
we next exploit a monotonic feasibility property with respect to the target
average SNR level.

\subsection{Monotonic Feasibility in the Target SNR Level}
\label{subsec:wg_mono}

For a given target $t \ge 0$, consider the following feasibility question:
\emph{Is there an activation matrix $\mathbf{A}$ such that
$\bar{\Gamma}(\boldsymbol{\psi}_{u,v};\mathbf{A})\ge t$ holds for all
$(u,v)\in\Omega$?}
This feasibility question is monotone in $t$, as stated
below.

\begin{lem} \label{lem:mono_feas}
Given $t$, the feasible set is defined as
\begin{align}
    \mathcal{F}(t) \triangleq
    \Big\{ \mathbf{A}\in\{0,1\}^{N\times M}  \big| &
    \bar{\Gamma}(\boldsymbol{\psi}_{u,v};\mathbf{A})\ge t, \forall (u,v)\in\Omega,\notag\\
    &\sum_{m=1}^{M}a_{n,m}=1, \forall n\in\mathcal{N}
    \Big\}.
\end{align}
For any $0\le t_1 \le t_2$, we have $\mathcal{F}(t_2)\subseteq\mathcal{F}(t_1)$.
Equivalently, if the constraints are feasible at $t_2$, then they are also
feasible at any $t_1\in[0,t_2]$.
\end{lem}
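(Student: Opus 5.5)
The argument is a direct set inclusion. I will fix an arbitrary $\mathbf{A}\in\mathcal{F}(t_2)$ and verify that it satisfies every defining condition of $\mathcal{F}(t_1)$.

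The binary and one-hot conditions $\mathbf{A}\in\{0,1\}^{N\times M}$ and $\sum_{m=1}^{M}a_{n,m}=1$ for all $n\in\mathcal{N}$ do not depend on the target level. They therefore carry over unchanged.

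For the SNR constraints, membership in $\mathcal{F}(t_2)$ gives $\bar{\Gamma}(\boldsymbol{\psi}_{u,v};\mathbf{A})\ge t_2$ for every $(u,v)\in\Omega$. Chaining this with $t_2\ge t_1$ yields
\begin{align}
\bar{\Gamma}(\boldsymbol{\psi}_{u,v};\mathbf{A})\ge t_2\ge t_1,\quad \forall (u,v)\in\Omega.
\end{align}
Hence $\mathbf{A}\in\mathcal{F}(t_1)$, which proves $\mathcal{F}(t_2)\subseteq\mathcal{F}(t_1)$.

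The equivalent statement follows immediately. Feasibility at $t_2$ means $\mathcal{F}(t_2)\neq\emptyset$, and the inclusion then gives $\mathcal{F}(t_1)\neq\emptyset$ for every $t_1\in[0,t_2]$.

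None of these steps is a real obstacle. The only point worth stating explicitly is that $\bar{\Gamma}(\boldsymbol{\psi}_{u,v};\mathbf{A})$, given by Lemma \ref{lem: per-grid snr}, depends only on $\mathbf{A}$ and not on $t$. This is what allows the same matrix to be reused when the target is lowered. Nonnegativity of $t_1$ is not needed for the inclusion itself; it only keeps the statement within the regime of physical SNR targets.
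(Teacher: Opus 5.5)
Your proof is correct and follows essentially the same route as the paper: fix any $\mathbf{A}\in\mathcal{F}(t_2)$, observe that the one-hot constraints do not depend on the target level, and chain $\bar{\Gamma}(\boldsymbol{\psi}_{u,v};\mathbf{A})\ge t_2\ge t_1$ for all $(u,v)\in\Omega$ to conclude $\mathbf{A}\in\mathcal{F}(t_1)$. Your added remarks, that the equivalent form follows from nonemptiness of $\mathcal{F}(t_2)$ and that $t_1\ge 0$ is not needed for the inclusion, are accurate.
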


\emph{Proof:}
Let $0\le t_1 \le t_2$ and take any $\mathbf{A}\in\mathcal{F}(t_2)$.
Then $\bar{\Gamma}(\boldsymbol{\psi}_{u,v};\mathbf{A})\ge t_2$ for all
$(u,v)\in\Omega$, and the one-hot constraints $\sum_{m=1}^{M}a_{n,m}=1$ hold for
all $n\in\mathcal{N}$. Since $t_1\le t_2$, it follows that
$\bar{\Gamma}(\boldsymbol{\psi}_{u,v};\mathbf{A})\ge t_1$ for all
$(u,v)\in\Omega$, while the one-hot constraints remain unchanged. Hence
$\mathbf{A}\in\mathcal{F}(t_1)$, which proves
$\mathcal{F}(t_2)\subseteq\mathcal{F}(t_1)$.
\hfill$\blacksquare$

Lemma~\ref{lem:mono_feas} shows that the feasibility of the per-grid constraints
$\bar{\Gamma}(\boldsymbol{\psi}_{u,v};\mathbf{A})\ge t$ is monotone in $t$:
whenever a target level $t$ is achievable, any smaller level is also
achievable. This monotonic structure implies that the set of achievable
worst-grid SNR levels forms an interval and naturally motivates applying a
bisection search over $t$ to solve Problem~\eqref{prob:wg_mip}.

\subsection{Low-Complexity Bisection-Based Algorithm for Solving Problem~\eqref{prob:wg_mip}}
\label{subsec:wg_bisect_heur}

The monotonic feasibility property in
Lemma~\ref{lem:mono_feas} naturally motivates
searching over the target SNR level $t$. If the feasibility problem in
\eqref{prob:wg_mip} is solved exactly at each tested value of $t$, a standard bisection
procedure can recover the globally optimal worst-grid average SNR up to
the prescribed numerical tolerance. However, repeatedly solving the
binary feasibility problem can incur high computational complexity as
the numbers of waveguides, candidate locations, and spatial grids
increase. To obtain a scalable solution, we instead develop a
low-complexity bisection-based method in which the feasibility search at
each tested $t$ is carried out through coordinate-wise deficit
reduction.

\subsubsection{Feasibility Search at a Given SNR Level $t$}

For a fixed test value $t$, determining whether the target is globally
feasible amounts to finding an activation matrix $\mathbf{A}$ satisfying
\begin{subequations} \label{eqn:local feasibility problem}
    \begin{align}
    \text{find}\quad & \mathbf{A} \\
    \text{s.t.}\quad
    & \bar{\Gamma}(\boldsymbol{\psi}_{u,v};\mathbf{A}) \geq t,
    \quad \forall (u,v)\in\Omega, \\
    & \sum_{m=1}^{M}a_{n,m}=1,\quad \forall n\in\mathcal{N},  \\
    & a_{n,m}\in\{0,1\},\quad \forall n\in\mathcal{N},~m\in\mathcal{M}.
    \end{align}
\end{subequations}

Problem~\eqref{eqn:local feasibility problem} is a binary linear feasibility problem and can be solved
exactly using a mixed-integer solver. However, repeatedly solving such
problems within the search over $t$ can be computationally expensive.
We therefore develop a low-complexity coordinate-wise procedure to
search for a feasible activation matrix at each tested SNR level.

\subsubsection{Low-Complexity Feasibility Search via Deficit Reduction}
For a fixed level $t$ and a current one-hot activation matrix $\mathbf{A}$,
define the per-grid SNR deficit
\begin{equation}
    d_{u,v}(\mathbf{A};t)
    \triangleq
    \Big[t-\bar{\Gamma}(\boldsymbol{\psi}_{u,v};\mathbf{A})\Big]^+,
    \qquad (u,v)\in\Omega,
\end{equation}
and the total deficit
\begin{equation}
    D(\mathbf{A};t)
    \triangleq
    \sum_{(u,v)\in\Omega} d_{u,v}(\mathbf{A};t).
\end{equation}
Clearly, $D(\mathbf{A};t)=0$ implies that $\mathbf{A}$ satisfies
$\bar{\Gamma}(\boldsymbol{\psi}_{u,v};\mathbf{A})\ge t$ for all $(u,v)\in\Omega$.

Starting from an initial one-hot $\mathbf{A}$, we sweep over waveguides
$n=1,\ldots,N$ and update one waveguide at a time to reduce $D(\mathbf{A};t)$.
Let $m_n$ denote the currently activated candidate on waveguide $n$ (i.e.,
$a_{n,m_n}=1$). As in \eqref{eq:residual_snr_cov_nosymbol}, define
\begin{equation}
    \bar{\Gamma}_{-n}(\boldsymbol{\psi}_{u,v};\mathbf{A})
    \triangleq
    \bar{\Gamma}(\boldsymbol{\psi}_{u,v};\mathbf{A})
    - \rho\,\bar{G}_{n,m_n}(\boldsymbol{\psi}_{u,v}),
    \ (u,v)\in\Omega.
\end{equation}
If waveguide $n$ switches to candidate $m$, the resulting per-grid average SNR
is given by \eqref{eq:snr_after_switch}, i.e.,
$\bar{\Gamma}(\boldsymbol{\psi}_{u,v};\mathbf{A}_{n\to m})
=\bar{\Gamma}_{-n}(\boldsymbol{\psi}_{u,v};\mathbf{A})
+\rho\,\bar{G}_{n,m}(\boldsymbol{\psi}_{u,v})$.
We then choose the candidate that minimizes the updated total deficit:
\begin{equation}
    m_n^\star \in \arg\min_{m\in\mathcal{M}}
    \sum_{(u,v)\in\Omega}
    \Big[t-\bar{\Gamma}(\boldsymbol{\psi}_{u,v};\mathbf{A}_{n\to m})\Big]^+.
    \label{eq:deficit_update_rule}
\end{equation}
After several sweeps, or once $D(\mathbf{A};t)$ no longer decreases,
the coordinate-wise search terminates. If $D(\mathbf{A};t)=0$, the
obtained activation matrix satisfies all the per-grid SNR constraints
in \eqref{eqn:local feasibility problem}, and hence the tested SNR level $t$ is feasible. In contrast,
if $D(\mathbf{A};t)>0$, this only indicates that the coordinate-wise
search fails to find a feasible activation matrix from the considered
initialization; it does not imply that Problem~\eqref{eqn:local feasibility problem} is globally
infeasible, since the discrete coordinate updates may converge to a
coordinate-wise local solution.

Accordingly, when this coordinate-wise feasibility search is embedded
into the bisection procedure, the resulting method should be regarded
as a low-complexity heuristic for Problem~\eqref{prob:wg_mip}, rather than a globally
optimal solver. Its practical benefit lies in replacing repeated global
mixed-integer optimization with simple candidate evaluations based on
the precomputed SNR maps. The resulting solution-quality--complexity
tradeoff will be evaluated against the globally optimal branch-and-bound (BnB)
solution of Problem~\eqref{prob:wg_mip} in Sec. \ref{sec:simulation}.

We also note that the coordinate-wise deficit-reduction search may also depend on the
initial activation matrix. In practice, a multi-start implementation can
be employed to alleviate this sensitivity by performing the search from
multiple feasible activation matrices and retaining the best obtained
solution.


\begin{table*}[t]
\centering
\caption{Summary and comparison between Part~I and Part~II .}
\label{tab:partI_partII_compare}
\renewcommand{\arraystretch}{1.48}
\begin{tabular}{p{2.6cm} p{7cm} p{7.2cm}}
\hline\hline
{\cellcolor{gray!26}\textbf{Aspect}} & {\cellcolor{gray!26}\textbf{Part~I: Traffic-aware case}} & {\cellcolor{gray!26}\textbf{Part~II: Geometry-aware case}} \\
\hline

\multirow{1.6}{*}{\textbf{Environment model}}
& Known traffic distribution modeled by a Gaussian-hotspot mixture; propagation follows a distance-dependent stochastic LoS/NLoS model (without explicit obstacles).
& Known blockage layout modeled by rectangular cuboids; LoS visibility is deterministically determined by geometry, while NLoS follows distance-dependent random fading.
\\

\multirow{1.6}{*}{\cellcolor{gray!16}\textbf{Service region model}}
& {\cellcolor{gray!16} Grid-based service region  to enable network-level performance evaluation/optimization.}
& {\cellcolor{gray!16}Grid-based model to enable geometry-aware coverage optimization and offline map construction.}
\\

\multirow{1.6}{*}{\textbf{Antenna control model}}
& Pinching antennas can move continuously along each waveguide; locations are real-valued design variables.
& Each waveguide has $M$ predefined candidate tap points; antenna activation is controlled via binary variables. \\

\multirow{1.6}{*}{\cellcolor{gray!16}\textbf{Performance metrics}}
& {\cellcolor{gray!16}(i) Traffic-weighted network average SNR; (ii) traffic-restricted worst-grid average SNR over an active set $\Omega_{\rm act}$.}
& {\cellcolor{gray!16}(i) Average-SNR-threshold coverage over $\Omega$; (ii) worst-grid average SNR over $\Omega$.} \\

\multirow{1.6}{*}{\textbf{Design objectives}}
& Continuous placement to optimize traffic-aware network average SNR and traffic-restricted worst-grid average-SNR.
& Discrete activation to optimize average-SNR-threshold coverage and worst-grid average-SNR. \\

\multirow{3.2}{*}{\cellcolor{gray!16}\textbf{Workflow and solver}}
& {\cellcolor{gray!16}Online continuous optimization: candidate-based low-complexity method for the traffic-weighted average SNR maximization, and BCD with bisection for the traffic-restricted worst-grid SNR maximization.}
& {\cellcolor{gray!16}Offline--online discrete optimization: offline SNR-map construction, followed by MILP/coordinate-ascent for coverage maximization and epigraph-based bisection with low-complexity feasibility search for worst-grid SNR optimization.}
\\

\hline\hline
\end{tabular}
\end{table*}

\section{Summary and Outlook}
\label{sec:summary}

Our two-part papers develop environment-aware design methodologies for
pinching-antenna systems from two complementary perspectives. In Part~I, we
consider a traffic-aware setting where user demand is modeled by a spatial
traffic distribution and the propagation environment is captured through a
distance-dependent stochastic LoS/NLoS model. Under this setting, we design
continuous pinching-antenna positions along the waveguides to optimize
traffic-aware network performance metrics. In Part~II, we shift to a
geometry-aware setting with explicit blockage layouts, where the LoS visibility
is deterministically determined by obstacles and each waveguide admits a finite
set of candidate pinching-antenna locations. This leads to a practical
discrete-activation formulation, for which we develop geometry-aware coverage
and fairness optimization strategies based on offline-precomputable SNR maps.
Table~\ref{tab:partI_partII_compare} summarizes the two baseline settings and
highlights how their distinct environment models and control capabilities lead
to different performance objectives and algorithmic solutions, together
providing a modular foundation for more sophisticated pinching-antenna system
designs.

Overall, this two-part work establishes two basic environment-aware
baselines for pinching-antenna network optimization. These two settings
are considered separately because the present studies are intended as
initial investigations of environment-aware network-level
pinching-antenna design. We therefore first establish and understand two
fundamental sources of spatial heterogeneity separately, namely traffic
heterogeneity and geometry-induced propagation heterogeneity. Part I
isolates the former and investigates how long-term spatial traffic demand
affects network-level pinching-antenna positioning, whereas Part II
focuses on the latter and exploits explicit environment geometry to
mitigate blockage-induced coverage degradation. Such a separation allows
the respective physical effects and optimization structures associated
with the two types of environment information to be identified more
clearly, thereby providing a methodological foundation for more general
joint designs.

The traffic-aware and geometry-aware settings are not mutually exclusive. When both the spatial traffic distribution and detailed environment geometry are available, they can be incorporated jointly within the same grid-based network-level framework. For example, letting $p_{u,v}$ denote the traffic probability of grid $(u,v)$, the geometry-aware coverage objective in (20) can be extended to the traffic-weighted coverage $\sum_{(u,v)\in\Omega}p_{u,v}c_{u,v}$, while retaining the same geometry-dependent SNR constraints and activation model. Similarly, geometry-aware per-grid SNRs can be incorporated into the traffic-weighted average-SNR and fairness metrics considered in Part I. Such a joint design, however, introduces additional challenges, including the tradeoff between serving high-demand regions and maintaining region-wide robustness. Moreover, when continuous pinching-antenna positioning is combined with deterministic geometry-aware blockage, the LoS state may change discontinuously across blockage boundaries, and thus the structural properties exploited in Part I may no longer directly apply. Joint traffic- and geometry-aware optimization therefore represents a natural and important extension of the present framework.

\begin{rem} 
The present model adopts cuboid obstacles, binary LoS visibility, and
uniform traffic as tractable abstractions for this initial geometry-aware
study. More irregular obstacle shapes and propagation effects such as
partial blockage or diffraction can be incorporated through more detailed
geometry/channel models when constructing the candidate-specific SNR maps,
while nonuniform traffic can be handled through the joint traffic- and
geometry-aware extension. Geometry uncertainty can
be addressed practically by enlarging obstacle regions or considering
multiple plausible blockage maps, and more fundamentally through
stochastic, chance-constrained, or worst-case robust optimization. Such
uncertainty-aware designs are more challenging due to the discontinuous
LoS states and deserve dedicated future investigation.
\end{rem}

\begin{rem} 
Although the current formulation focuses on evaluating one scheduled user at each
possible spatial location, the resulting pinching-antenna configuration
is intended for long-term regional service rather than a particular
instantaneous user. In a practical multi-user network, this long-term
configuration can be combined with short-term optimization of user
scheduling, resource allocation, beamforming/precoding, and power
allocation based on instantaneous user CSI. More generally, a
two-timescale joint design can further optimize the long-term antenna
configuration while accounting for the resulting multi-user transmission
performance, which deserves dedicated investigation in future work.
\end{rem}

\begin{table}[!t]
\centering
\caption{Default simulation parameters.}
\renewcommand{\arraystretch}{1.25}
\begin{tabular}{p{4.8cm} p{3.2cm}}
\hline \hline
\textbf{Parameter} & \textbf{Value} \\
\hline
Carrier frequency ($f_c$) & $28$ GHz \\
Communication-region size ($D_x \times D_y$) & $200 \times 60$ m$^2$ \\
Waveguide height ($d_v$) & $10$ m \\
Transmit power ($P$) & $40$ dBm \\
Noise power ($\sigma^2$) & $-70$ dBm \\
NLoS power ($\mu^2$) &  $-60$ dBm \\
Number of PAs / waveguides ($N$) & $4$ \\
Number of grids ($N_h \times N_v$) & $400 \times 120$ \\
Algorithm tolerance ($\epsilon_t,\epsilon_d$) & $10^{-3}$ \\
Obstacle height ($H_{\mathrm{blk}}$) & $6$ m \\ 
\hline \hline
\end{tabular}
\label{tab:sim-param}
\end{table}

\section{Simulation Results} \label{sec:simulation}

In this section, numerical results are presented to validate the performance of the proposed algorithms for geometry-aware pinching-antenna system designs. Unless otherwise stated, the main simulation parameters are summarized in Table~\ref{tab:sim-param}.
For the geometry-aware evaluation, users are assumed to be uniformly distributed over the communication region. All performance metrics are evaluated over a uniform $N_h \times N_v$ grid, after excluding the grid points that fall inside the obstacle footprints. The environment contains $K=6$ rectangular cuboid obstacles with height $H_{\mathrm{blk}}<d_v$. A direct link between a pinching antenna at height $d_v$ and a floor user is declared blocked if the corresponding three-dimensional line segment intersects the volume of any obstacle. Unless otherwise specified, the obstacles are placed as follows:
\begin{align}
    \mathcal{B}(x_c,y_1,y_2) \triangleq [x_c-\tfrac{w}{2},x_c+\tfrac{w}{2}]\times[y_1,y_2] \times [0, H_{\textrm{blk}}], 
\end{align}
where $x_c \in \{14,40,66\}, w=8$ m and $ (y_1,y_2) \in \{(0,20),(35,50)\}$ m.

\begin{figure}[!t]
	\centering
	\includegraphics[width=0.76\linewidth]{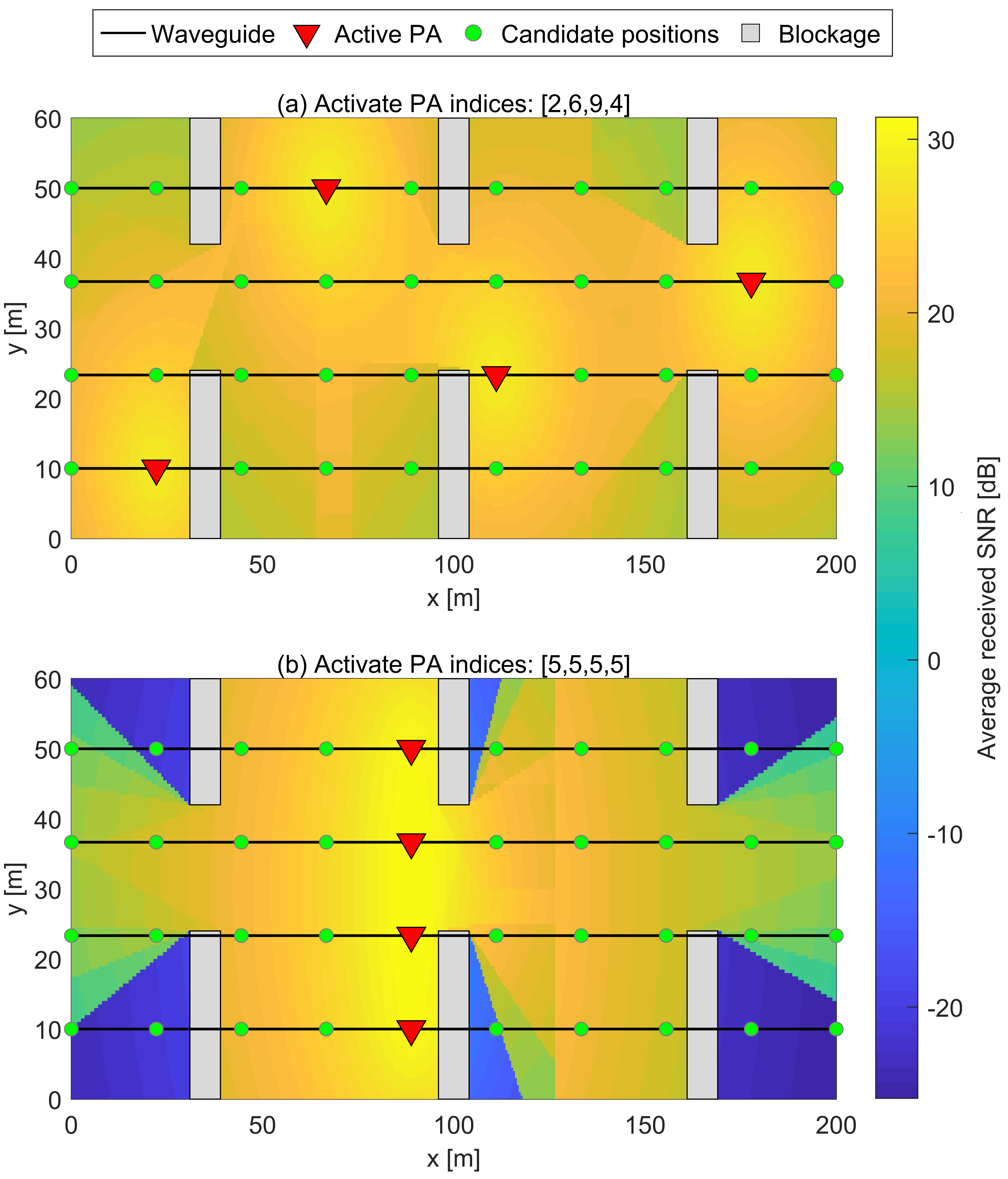}\\
        \captionsetup{justification=justified, singlelinecheck=false, font=small}	
        \caption{Top-view of the considered three-dimensional geometry-aware environment with average received SNR maps for two representative pinching-antenna deployments.} \label{fig:ga_rec_snr} \vspace{-5mm}
\end{figure} 

For a better illustration, we present the top-view of the considered geometry-aware scenario in Fig. \ref{fig:ga_rec_snr}. Fig. \ref{fig:ga_rec_snr} shows two examples of the resulting average received SNR maps under different pinching-antenna deployments, where Fig.\ref{fig:ga_rec_snr}(a) corresponds to the active pinching antenna indices $[2,6,9,4]$ (i.e., a more spatially distributed activation along the waveguides), while Fig. \ref{fig:ga_rec_snr}(b) corresponds to $[5,5,5,5]$ (i.e., a more aligned activation around a similar $x$-location). As seen, the distributed activation in Fig. \ref{fig:ga_rec_snr}(a) yields a more balanced SNR distribution across the region by providing diversity against distance loss and obstacle-induced LoS blockage. In contrast, the aligned activation in Fig. \ref{fig:ga_rec_snr}(b) results in pronounced low-SNR areas, especially near the edges and behind obstacles, since many user locations rely on blocked or long paths to the clustered active pinching antennas. This comparison further motivates geometry-aware deployment design, where exploiting spatial diversity is key to mitigating coverage ``dead zones''.

\begin{figure*}[!t]
	\begin{minipage}{0.32\linewidth}
		\centering
		\includegraphics[width=0.99\linewidth]{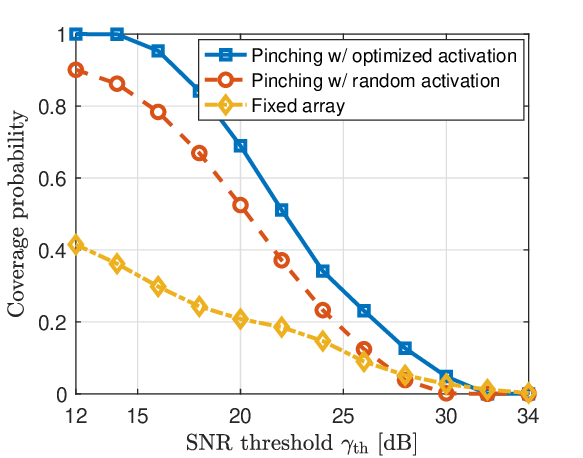}
		\captionsetup{justification=justified, singlelinecheck=false, font=small}	
		\caption{Average-SNR-threshold coverage probability versus the SNR threshold $\gamma_{\mathrm{th}}$ with $N = 4$ and $M = 10$.} \label{fig:ga_cov_max} \vspace{-3mm}
	\end{minipage}~~
	\begin{minipage}{0.32\linewidth} 
		\centering
		\includegraphics[width=0.99\linewidth]{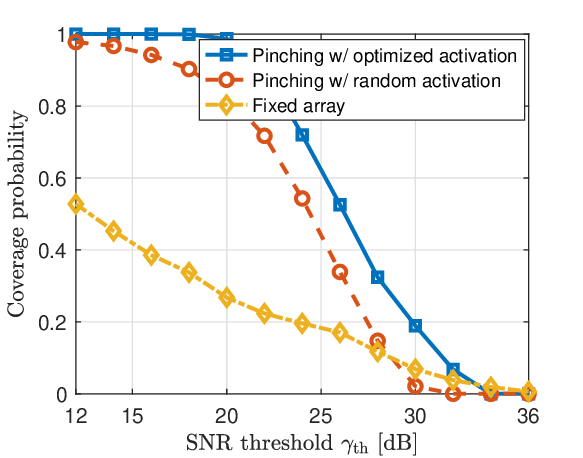}\\
		\captionsetup{justification=justified, singlelinecheck=false, font=small}	
        \caption{Average-SNR-threshold coverage probability versus the SNR threshold $\gamma_{\mathrm{th}}$ with $N = 8$ and $M = 20$.} \label{fig:ga_cov_max2}   \vspace{-3mm}
	\end{minipage}~~
	\begin{minipage}{0.32\linewidth}
			\centering
				\includegraphics[width=0.96\linewidth]{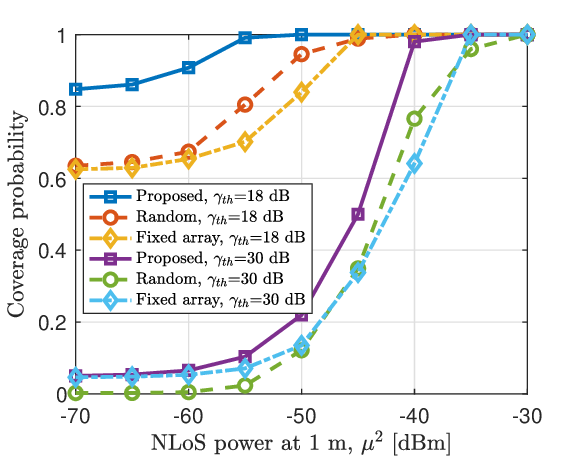}\\
        \captionsetup{justification=justified, singlelinecheck=false, font=small}	
        \caption{Sensitivity of the average-SNR-threshold coverage probability to the NLoS power parameter $\mu^2$.} \label{fig:ga_cov_max_mu}  \vspace{-3mm}
	\end{minipage} 
\end{figure*}

Fig. \ref{fig:ga_cov_max} illustrates the average-SNR-threshold coverage probability as a function of the SNR threshold $\gamma_{\mathrm{th}}$ under a uniform user distribution. For each $\gamma_{\mathrm{th}}$, the coverage probability is computed as the fraction of valid grid points whose average received SNR exceeds $\gamma_{\mathrm{th}}$, where grid points inside obstacle footprints are excluded. Three schemes are compared, namely, pinching antennas with optimized activation obtained by the proposed algorithm, pinching antennas with random activation where one candidate position per waveguide is selected uniformly at random, and a fixed $N$-element antenna array placed at the center of the communication region, where the antenna elements are half-wavelength spaced and are placed at the same height as the waveguides.

As observed in Fig.~\ref{fig:ga_cov_max}, the proposed optimized activation scheme consistently achieves the highest coverage probability over most of the considered threshold range. This indicates that, by adaptively selecting the active pinching locations across waveguides, the proposed method can better exploit spatial degrees of freedom to increase the fraction of locations meeting a target SNR requirement, especially in the presence of obstacle-induced LoS blocking and distance-dependent path loss. Random activation provides intermediate performance and generally outperforms the fixed-array baseline at moderate thresholds, since the pinching architecture still offers spatial diversity even without optimization. In contrast, the fixed array exhibits a noticeably lower coverage level at small and moderate $\gamma_{\mathrm{th}}$, reflecting the inherent limitation of a static, centrally located deployment in serving spatially distributed users under blockage. 

It is also seen that, when $\gamma_{\mathrm{th}}$ becomes very large, the fixed array may slightly outperform the pinching-based schemes. This behavior is expected because the fixed array concentrates all $N$ antenna elements within a very small aperture around the region center, thereby creating a localized ``hot spot'' with exceptionally high received SNR in its near vicinity. Such a strong local peak can contribute non-negligibly to the coverage probability under extremely stringent thresholds. In comparison, the pinching antennas are constrained to select one active radiating point on each waveguide from a discrete candidate set, and the resulting radiated power is spatially distributed across different waveguides and locations; while this improves overall spatial coverage at practical thresholds, it may yield a slightly smaller area that exceeds a very high $\gamma_{\mathrm{th}}$. As $\gamma_{\mathrm{th}}$ increases further, the coverage of all schemes eventually decreases toward zero, while the proposed scheme maintains a clear advantage over the two baselines over a wide operating range, highlighting the benefit of geometry-aware activation optimization for coverage enhancement.

Fig.~\ref{fig:ga_cov_max2} further evaluates the average-SNR-threshold coverage probability as a function of the SNR threshold $\gamma_{\mathrm{th}}$ for the case with $N=8$ waveguides/pinching antennas and $M=20$ candidate positions per waveguide. Compared with Fig.~\ref{fig:ga_cov_max}, which corresponds to $(N,M)=(4,10)$, increasing the spatial degrees of freedom to $(8,20)$ yields a clear upward shift of all coverage curves, with the improvement being most pronounced in the low-to-moderate SNR regime. This trend is expected because activating more pinching antennas provides stronger power aggregation and additional spatial diversity, while a denser candidate set offers finer location resolution, enabling the optimized activation to select radiating points that better align with favorable LoS corridors and mitigate obstacle-induced shadowing. Consequently, for moderate $\gamma_{\mathrm{th}}$, a larger portion of the region can satisfy the SNR requirement, and the performance advantage of optimized activation becomes more evident. In contrast, when $\gamma_{\mathrm{th}}$ is very large, the coverage probability of all schemes remains small and becomes less sensitive to further increasing $(N,M)$. This is because the high-threshold regime is dominated by a few localized high-SNR ``hot spots'', whose existence and spatial extent are mainly dictated by the strongest available LoS links and the minimum achievable path loss in the given geometry. Once such hot spots are formed, adding more antennas or candidate points has limited impact on expanding the area that exceeds an extremely stringent $\gamma_{\mathrm{th}}$. As a result, although enlarging $(N,M)$ yields only marginal improvement at high SNR thresholds, it significantly enhances the coverage in the practical SNR regime.


Fig. \ref{fig:ga_cov_max_mu} evaluates the impact of the NLoS power level $\mu^2$ on the average-SNR-threshold coverage probability for the geometry-aware scenario. Two representative SNR thresholds, $\gamma_{\mathrm{th}}=18$~dB and $\gamma_{\mathrm{th}}=30$~dB, are considered. Three schemes are compared, including pinching antennas with optimized activation obtained by the proposed algorithm, pinching antennas with random activation, and a fixed centered antenna-array baseline.
As shown in Fig. \ref{fig:ga_cov_max_mu}, increasing $\mu^2$ monotonically improves the coverage probability for all schemes, since a stronger NLoS component lifts the average received SNR even when LoS links are blocked. At the moderate threshold $\gamma_{\mathrm{th}}=18$~dB, the coverage is already high for the proposed scheme over a wide range of $\mu^2$, while the fixed-array and random-activation baselines exhibit a more visible dependence on $\mu^2$, indicating that the proposed optimized activation is less reliant on NLoS strength due to its ability to select favorable activation locations. In contrast, at the stringent threshold $\gamma_{\mathrm{th}}=30$~dB, the coverage is more sensitive to $\mu^2$ and remains low when $\mu^2$ is weak, because meeting such a high SNR requirement typically requires either strong LoS links or a sufficiently large NLoS power. As $\mu^2$ increases, the performance gap between the proposed scheme and the two baselines is most pronounced in the intermediate regime, where NLoS is neither negligible nor dominant; in this regime, optimized activation can still exploit geometry to form higher-SNR regions, whereas random activation and the fixed array are less effective. When $\mu^2$ becomes sufficiently large, the curves approach one and the differences diminish, since the SNR becomes increasingly dominated by the NLoS component and the benefit of geometry-aware activation is reduced.

\begin{table}[!t]
\centering
\caption{Comparison with the globally optimal solution for the
average-SNR-threshold coverage maximization problem.}
\renewcommand{\arraystretch}{1.2}
\label{tab:coverage_global}
\begin{tabular}{c|cc|c|cc}
\hline
\multirow{2}{*}{$N$}
& \multicolumn{2}{c|}{Coverage probability}
& \multirow{2}{*}{Gap [p.p.]}
& \multicolumn{2}{c}{CPU time [s]} \\
& Proposed & BnB & & Proposed & BnB \\
\hline
3 & 0.34789 & 0.34789 & 0     & 0.0138 & 22.16  \\
4 & 0.59155 & 0.59155 & 0     & 0.0105 & 55.67  \\
5 & 0.77183 & 0.77295 & 0.112 & 0.0041 & 234.91 \\
6 & 0.94507 & 0.94743 & 0.236 & 0.0061 & 428.63 \\
\hline
\end{tabular}
\end{table}

To further evaluate the solution quality and computational efficiency of
the proposed coordinate-ascent method, we compare it with the globally
optimal branch-and-bound (BnB) solution of the MILP in \eqref{prob:cov_milp}. For the proposed low-complexity method, $10$ feasible initial activation
matrices are used and the best converged solution is retained.
We consider
a representative small-scale setting with $D_x=80$~m, $D_y=40$~m,
$N_h\times N_v=40\times20$, $M=6$, and
$\gamma_{\rm th}=30$~dB. The comparison is summarized in
Table~\ref{tab:coverage_global}. 
As shown in Table~\ref{tab:coverage_global}, the proposed
coordinate-ascent method achieves the same coverage probability as the
globally optimal solution for $N=3$, and $4$. For $N=6$, the
coverage-probability gap is only about $0.236$ percentage points.
In contrast, the computational cost differs substantially. The proposed
method requires less than $0.014$~s in all considered cases, whereas the
globally optimal BnB solver requires from $22.16$~s to $428.63$~s.
These results show that the proposed method can achieve near-global
coverage performance in the considered setting with substantially lower
computational complexity.

\begin{figure}[!t]
	\centering
	\includegraphics[width=0.76\linewidth]{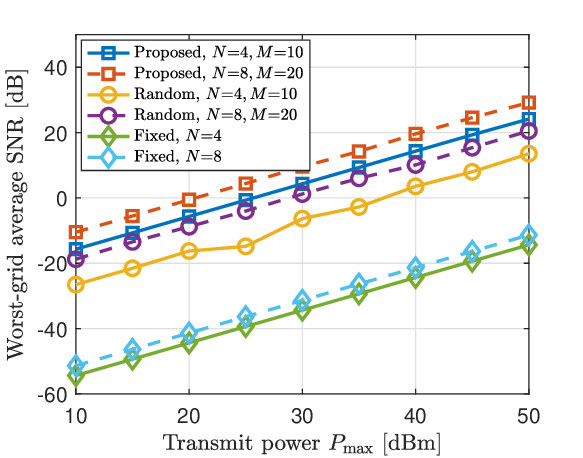}\\
        \captionsetup{justification=justified, singlelinecheck=false, font=small}	
        \caption{Worst-grid average-SNR performance versus transmit power $P$.} \label{fig:ga_worst_Pmax} \vspace{-5mm}
\end{figure} 

Fig.~\ref{fig:ga_worst_Pmax} plots the achieved worst-grid average SNR as a function of the transmit power $P$ for the geometry-aware max-min average-SNR design. Here, the worst-grid average SNR is defined as the minimum, over all valid grids (excluding those inside the obstacle footprints), of the average received SNR. We compare three schemes, namely the proposed pinching-antenna design with optimized activation obtained by the proposed algorithm, pinching antennas with random activation, and a fixed centered $N$-element antenna array with half-wavelength spacing deployed at the same height as the waveguides. To illustrate the impact of deployment freedom, the results are shown for two configurations, i.e., $(N,M)=(4,10)$ and $(N,M)=(8,20)$, where $N$ denotes the number of waveguides/active pinching antennas and $M$ denotes the number of candidate locations per waveguide.

As observed in Fig.~\ref{fig:ga_worst_Pmax}, all schemes increase approximately linearly (in dB) with $P$, as the average received SNR scales proportionally with transmit power under the adopted average-SNR model. More importantly, for both $(N,M)=(4,10)$ and $(N,M)=(8,20)$, the proposed optimized activation consistently achieves the largest worst-grid SNR across the entire power range, demonstrating that geometry-aware activation can effectively mitigate the most unfavorable locations caused by distance-dependent path loss and obstacle-induced LoS blocking. Random activation provides a noticeable improvement over the fixed-array baseline but remains clearly below the optimized design, indicating that while the pinching architecture offers inherent spatial diversity, explicit optimization is crucial for substantially lifting the worst-case performance. The fixed centered array yields the lowest worst-grid SNR, reflecting the limitation of a static deployment in a blockage-rich environment: the max-min objective is dominated by a small set of severely shadowed or farthest grids, and without the ability to relocate radiating points, the fixed array cannot eliminate these geometric ``dead zones,'' leading to a persistently low minimum SNR even when $P$ increases.

Comparing the two configurations in Fig.~\ref{fig:ga_worst_Pmax}, increasing the spatial degrees of freedom from $(N,M)=(4,10)$ to $(N,M)=(8,20)$ yields a clear improvement for the pinching-antenna schemes, where the curves of both optimized activation and random activation are shifted upward by roughly $\sim 10$ dB over the considered power range. This gain is expected since more waveguides/active points provide stronger power aggregation and richer spatial diversity, while a denser candidate set enables finer-grained placement so that the activation can better target and alleviate the most unfavorable (worst-grid) locations created by distance-dependent attenuation and obstacle-induced shadowing. In contrast, the fixed-array baseline benefits much less from increasing $N$, exhibiting only a limited improvement. The reason is that the worst-grid metric is dominated by a few bottleneck grids that lie in deep shadow or far-field corners, and a centrally deployed fixed array cannot relocate its radiation points to mitigate these geometric dead zones. As a result, additional antennas mainly enhance already-strong directions but do not substantially raise the minimum SNR across the entire region. Overall, these results confirm that additional deployment freedom translates more effectively into worst-case performance gains for the pinching architecture, especially when combined with geometry-aware activation optimization.

\begin{figure}[!t]
	\centering
	\includegraphics[width=0.76\linewidth]{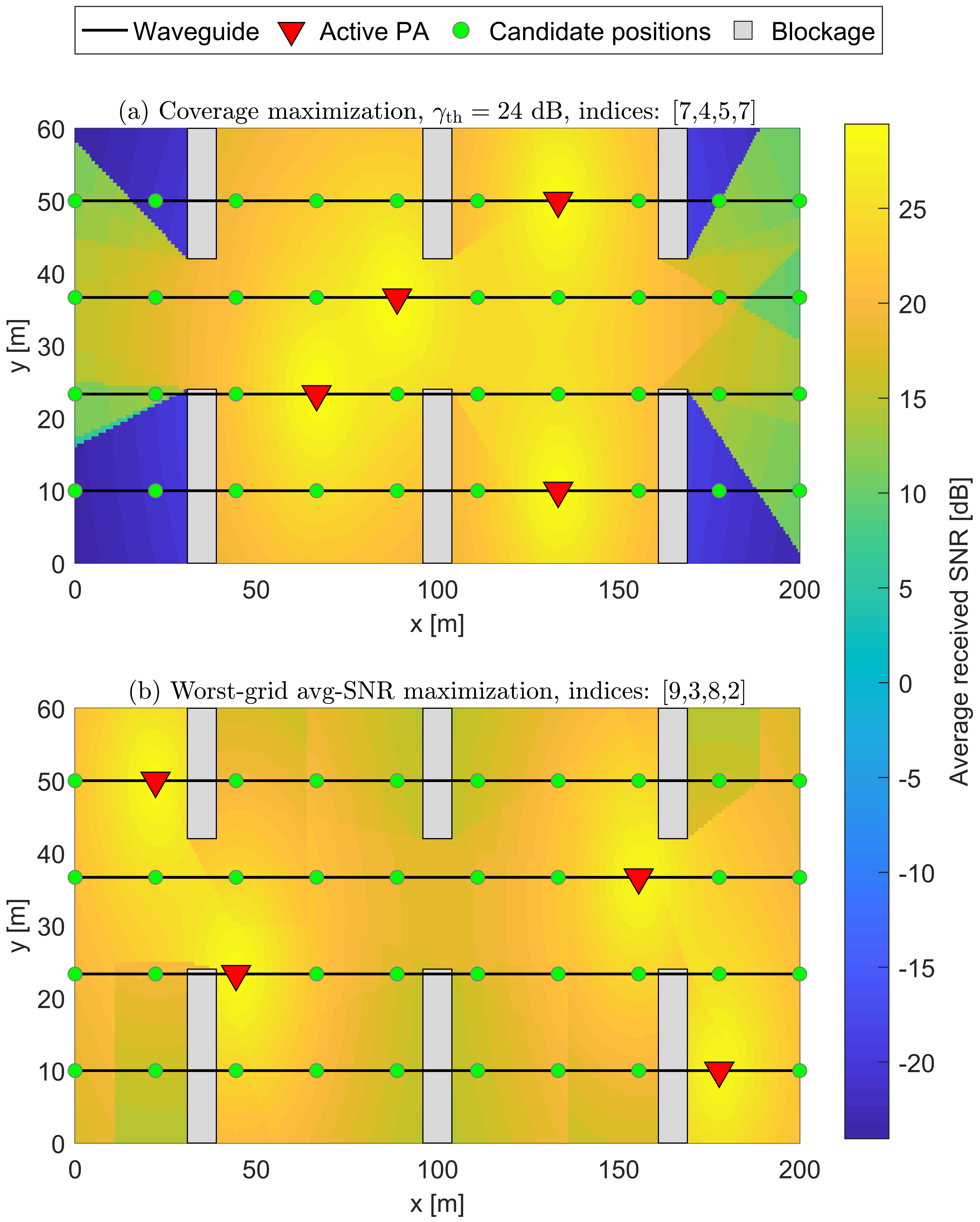}\\
        \captionsetup{justification=justified, singlelinecheck=false, font=small}	
        \caption{Top-view of the considered geometry-aware environment with optimized pinching-antenna deployments for two representative design objectives: (a) average-SNR-threshold coverage maximization with $\gamma_{\mathrm{th}}=24$ dB; (b) worst-grid average-SNR maximization.} \label{fig:ga_two_obj_maps} \vspace{-5mm}
\end{figure} 

Fig.~\ref{fig:ga_two_obj_maps} visualizes the optimized pinching-antenna deployments obtained for the two geometry-aware design problems in the same blockage environment. The background colormap shows the resulting spatial map of the average received SNR evaluated on a uniform grid after excluding the grid points that fall inside the blockage footprints. A single shared colorbar is used for both subfigures to facilitate a direct visual comparison of the achieved SNR levels and the spatial patterns.

In Fig.~\ref{fig:ga_two_obj_maps}(a), the deployment is optimized for the average-SNR-threshold coverage maximization with a target threshold $\gamma_{\mathrm{th}}=24$ dB, and the corresponding activated pinching antenna indices are $[7,4,5,7]$. The selected pinching locations tend to create relatively strong SNR regions that cover a large portion of the feasible area, while accepting that some peripheral regions and obstacle-shadowed corners remain below the threshold. In contrast, Fig.~\ref{fig:ga_two_obj_maps}(b) shows the deployment optimized for the worst-grid average-SNR maximization, with activated indices $[9,3,8,2]$. Compared with (a), the activated points in (b) are more ``spread out'' across the waveguides and tend to shift toward locations that improve the weakest regions, leading to a more balanced SNR distribution with fewer pronounced low-SNR areas.

The different activation patterns in Fig.~\ref{fig:ga_two_obj_maps} directly reflect the distinct design philosophies of the two objectives. Coverage maximization is threshold-driven: it aims to maximize the fraction of grids whose SNR exceeds $\gamma_{\mathrm{th}}$, and therefore it is beneficial to position some radiating points so as to enlarge high-SNR regions around favorable LoS corridors, even if this leaves a few highly shadowed or far-away grids unimproved. This behavior is visible in Fig.~\ref{fig:ga_two_obj_maps}(a), where strong regions dominate much of the interior while the corners behind blockages still exhibit low SNR. On the other hand, worst-grid maximization is fairness-oriented: it is governed by the minimum SNR over all grids, and thus it forces the activation to explicitly mitigate ``dead zones'' by improving the most disadvantaged locations. Consequently, the optimized deployment in Fig.~\ref{fig:ga_two_obj_maps}(b) tends to avoid concentrating radiating points on already-strong areas and instead reallocates degrees of freedom to lift the weakest parts of the region. Overall, Fig.~\ref{fig:ga_two_obj_maps} highlights that, even under the same hardware constraints and blockage geometry, geometry-aware activation optimization can yield markedly different pinching antenna placements depending on whether the design goal prioritizes coverage at a target SNR level or worst-case spatial fairness.

\begin{figure}[!t]
	\centering
	\includegraphics[width=0.76\linewidth]{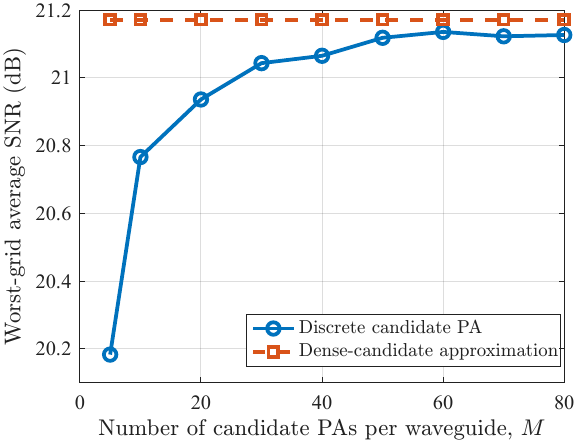}\\
        \captionsetup{justification=justified, singlelinecheck=false, font=small}	
        \caption{Worst-grid average SNR versus the number of candidate pinching antennas per waveguide, where $M=200$ is used as a numerical approximation to continuously adjustable radiation locations.} \label{fig:candidate_density}
\end{figure} 

We next investigate the impact of the finite candidate set on the
geometry-aware design, as studied in \cite{tyrovolas2026many}. The candidate pinching antennas are uniformly
distributed along each waveguide, such that increasing the number of
candidates $M$ simultaneously reduces the candidate spacing. Specifically,
the spacing is given by $\Delta_x=D_x/(M+1)$. We consider the worst-grid
average-SNR maximization problem as a representative example and use a
dense deployment with $M=200$ as a numerical approximation to
continuously adjustable radiation locations. 
As shown in Fig.~\ref{fig:candidate_density}, the worst-grid average SNR
generally improves as $M$ increases and gradually approaches the
dense-candidate benchmark. In particular, the performance improves
significantly when increasing $M$ from a sparse deployment, whereas the
additional gain becomes relatively small once the candidate set is
sufficiently dense. This demonstrates that the performance loss caused by
the discrete activation architecture can be effectively reduced by
decreasing the candidate spacing, while an excessively dense deployment
provides only marginal additional benefit.

To further evaluate the solution quality of the proposed
low-complexity method, Table~\ref{tab:global_comparison}
compares it with the globally optimal BnB 
solution of the MILP in \eqref{prob:wg_mip} in small-scale settings with $D_x=80$~m, $D_y=40$~m, $N_h\times N_v=60\times30$, and $M=10$.  
For the proposed low-complexity method, $10$ feasible initial activation
matrices are used and the best converged solution is retained.
As shown, the proposed method achieves almost
the same worst-grid average SNR as the global solution. In
particular, the average optimality gap is no more than
$0.092$ dB for $N=3$--$6$. In contrast, the computational
difference becomes increasingly pronounced as $N$ grows.
The runtime of the proposed method remains below $0.1$ s,
whereas the average BnB runtime increases from $3.44$ s for
$N=3$ to $30.83$ s for $N=6$. These results demonstrate that
the proposed method provides a favorable tradeoff between
solution quality and computational complexity, despite not
having a global-optimality guarantee.

\begin{table}[!t]
\centering
\caption{Comparison with the globally optimal solution for the
worst-grid average-SNR maximization problem.}
\renewcommand{\arraystretch}{1.2}
\label{tab:global_comparison}
\begin{tabular}{c|cc|c|cc}
\hline
\multirow{2}{*}{$N$}
& \multicolumn{2}{c|}{Worst-grid average SNR [dB]}
& \multirow{2}{*}{Gap [dB]}
& \multicolumn{2}{c}{CPU time [s]} \\
& Proposed & BnB & & Proposed &  BnB \\
\hline
3 & 24.237 & 24.238 & 0.001 & 0.060 & 3.444 \\
4 & 26.376 & 26.376 & 0.001 & 0.069 & 9.828 \\
5 & 27.805 & 27.867 & 0.061 & 0.088 & 19.248 \\
6 & 28.917 & 29.009 & 0.092 & 0.092 & 30.834 \\
\hline
\end{tabular}
\end{table}

We further evaluate the impact of in-waveguide attenuation on the proposed
geometry-aware designs. As discussed in Part I and also investigated in
\cite{tyrovolas2025performance}, signal attenuation during propagation
along the waveguide can affect the effective radiated power of pinching
antennas located at different distances from the feed point. Following the
attenuation model adopted in Part I, the corresponding attenuation factor
is incorporated into the precomputed candidate-specific SNR maps. We
compare an attenuation-aware design, which accounts for the waveguide loss
during activation optimization, with an attenuation-ignorant design, which
optimizes the activation pattern without considering attenuation and is
then evaluated under the same attenuated channel. The results are averaged
over 10 independent blockage realizations.

\begin{table}[!t]
\centering 
\caption{Impact of in-waveguide attenuation on the proposed
geometry-aware designs.}
\renewcommand{\arraystretch}{1.15}
\label{tab:attenuation_sensitivity}
\begin{tabular}{c|ccc|ccc}
\hline
$\alpha_{\rm g}$
& \multicolumn{3}{c|}{Maximum coverage probability}
& \multicolumn{3}{c}{Worst-grid average SNR [dB]}\\
dB/m & Aware & Ignorant & Gap [p.p.]
& Aware & Ignorant & Gap\\
\hline
0    & 1.000 & 1.000 & 0     & 28.313 & 28.313 & 0     \\
0.01 & 1.000 & 0.999 & 0.016 & 27.998 & 27.924 & 0.074 \\
0.02 & 1.000 & 0.999 & 0.085 & 27.647 & 27.475 & 0.172 \\
0.04 & 1.000 & 0.995 & 0.415 & 26.883 & 26.540 & 0.342 \\
0.06 & 1.000 & 0.986 & 1.355 & 26.135 & 25.610 & 0.525 \\
0.08 & 1.000 & 0.967 & 3.247 & 25.413 & 24.702 & 0.711 \\
0.10 & 1.000 & 0.933 & 6.689 & 24.570 & 23.821 & 0.749 \\
\hline
\end{tabular}
\end{table}

As shown in Table~\ref{tab:attenuation_sensitivity}, the impact of
neglecting in-waveguide attenuation is small when the waveguide loss is
low. For example, at $\alpha_{\rm g}=0.01$~dB/m, the coverage and
worst-grid average-SNR gaps are only $0.016$ percentage points and
$0.074$~dB, respectively. As the attenuation increases, however, the
performance loss of the attenuation-ignorant design becomes more
pronounced. At $\alpha_{\rm g}=0.10$~dB/m, the corresponding gaps
increase to $6.689$ percentage points and $0.749$~dB. These results
indicate that the attenuation-free model is a reasonable approximation
for low-loss waveguides, whereas attenuation should be
explicitly accounted for in longer or more lossy deployments.
We note that, for long waveguides, the accumulated attenuation can also be alleviated
through a segmented-waveguide architecture \cite{ouyang2026swan}, where a long waveguide is
divided into multiple shorter segments with separate feed points, thereby
reducing the maximum in-waveguide propagation distance before radiation.

\vspace{-0mm}
\section{Conclusion} \label{sec: conclusion}
This paper investigated geometry-aware network-level design for pinching-antenna systems in blockage-rich environments. By discretizing the service region into grids and modeling the environment with three-dimensional cuboid obstacles, we introduced a deterministic LoS visibility indicator to capture geometry-induced LoS/NLoS transitions between candidate pinching-antenna locations and grid points. Under a practical discrete activation model, we derived tractable per-grid average-SNR expressions whose geometry-dependent components can be precomputed offline, enabling an efficient offline–online optimization workflow.
Building on these metrics, we formulated two geometry-aware network-level optimization problems: average-SNR-threshold coverage maximization and fairness-oriented worst-grid average-SNR maximization. Efficient solution methods were developed, including a low-complexity coordinate-ascent algorithm for the coverage maximization, and an epigraph-based low-complexity bisection method with coordinate-wise feasibility search for the worst-grid average SNR maximization. Numerical results demonstrated that geometry-aware pinching-antenna activation can improve both coverage probability and worst-grid SNR over representative fixed-array and random-activation baselines, while also revealing how the coverage–robustness tradeoff evolves with blockage layouts and key system parameters.
Future work may extend the proposed framework to multi-cell deployments, and imperfect or dynamically learned environment maps, as well as to hybrid traffic-and-geometry-aware objectives that better reflect time-varying network operations.

\vspace{-0mm}
\begin{appendices}
    \section{Proof of Lemma \ref{lem: per-grid snr}} \label{appd: per-grid snr} 
    For grid $(u,v)\in\Omega$,
    let $\Gamma(\boldsymbol{\psi}_{u,v};\mathbf{A})$ denote the instantaneous SNR
    under activation $\mathbf{A}$ as defined in \eqref{eq:snr_mrt}. The corresponding per-grid
    average received SNR is defined as
    \begin{equation}
        \bar{\Gamma}(\boldsymbol{\psi}_{u,v};\mathbf{A})
        \triangleq
        \mathbb{E}\!\left[\Gamma(\boldsymbol{\psi}_{u,v};\mathbf{A})\right]
        =
        \rho\,\mathbb{E}\!\left[
            \big\|\tilde{\boldsymbol{h}}(\boldsymbol{\psi}_{u,v};\mathbf{A})\big\|^2
        \right],
        \label{eq:avg_snr_def}
    \end{equation}
    where $\rho\triangleq P/\sigma^2$ and the expectation is taken over the
    small-scale fading.
    
    Using \eqref{eq:heff_def}--\eqref{eq:heff_vec} and $\sum_{m=1}^{M}a_{n,m}=1$,
    we obtain
    \begin{equation}
        \mathbb{E}\!\left[
            \big\|\tilde{\boldsymbol{h}}(\boldsymbol{\psi}_{u,v};\mathbf{A})\big\|^2
        \right]
        =
        \sum_{n=1}^{N}\sum_{m=1}^{M} a_{n,m}\,
        \bar{G}_{n,m}(\boldsymbol{\psi}_{u,v}),
        \label{eq:avg_hnorm}
    \end{equation}
    where
    \begin{equation}
        \bar{G}_{n,m}(\boldsymbol{\psi}_{u,v})
        \triangleq
        \mathbb{E}\!\left[\big|h_{n,m}(\boldsymbol{\psi}_{u,v})\big|^2\right]
    \end{equation}
    denotes the average channel power from candidate $(n,m)$ to grid $(u,v)$.
    Recalling \eqref{eq:hn-decomposition-geo} and noting that
    $h_{n,m}^{\mathrm{NLoS}}(\boldsymbol{\psi}_{u,v})$ is zero-mean and independent
    of the deterministic LoS term, we have
    \begin{align}
        \bar G_{n,m}(\boldsymbol{\psi}_{u,v}) &=
        \chi_{n,m}(\boldsymbol{\psi}_{u,v})\,
        \big|h_{n,m}^{\mathrm{LoS}}(\boldsymbol{\psi}_{u,v})\big|^2 \notag\\
        &\quad + \mathbb{E} \left[\big|h_{n,m}^{\mathrm{NLoS}}(\boldsymbol{\psi}_{u,v})\big|^2\right].
        \label{eq:Gbar_decomp}
    \end{align}
    From \eqref{eq:hn-los-geo}, we have
    \begin{equation}
        \big|h_{n,m}^{\mathrm{LoS}}(\boldsymbol{\psi}_{u,v})\big|^2
        =
        \frac{\eta}{r_{n,m}^2(\boldsymbol{\psi}_{u,v})}.
    \end{equation}
    Moreover, with \eqref{eq:gnl-grid}, the average NLoS power is given by
    \begin{equation}
        \mathbb{E}\!\left[\big|h_{n,m}^{\mathrm{NLoS}}(\boldsymbol{\psi}_{u,v})\big|^2\right]
        =
        \sum_{\ell=1}^{N_c}\frac{\mu_\ell^2}{r_{n,m}^2(\boldsymbol{\psi}_{u,v})}
        =
        \frac{\mu^2}{r_{n,m}^2(\boldsymbol{\psi}_{u,v})},
    \end{equation}
    where $\mu^2\triangleq\sum_{\ell=1}^{N_c}\mu_\ell^2$ denotes the aggregated NLoS channel power.
    Therefore, we have
    \begin{equation}
        \bar G_{n,m}(\boldsymbol{\psi}_{u,v})
        =
        \frac{\chi_{n,m}(\boldsymbol{\psi}_{u,v})\,\eta+\mu^2}
        {r_{n,m}^2(\boldsymbol{\psi}_{u,v})}.
        \label{eq:Gbar_piecewise}
    \end{equation}
    Substituting \eqref{eq:avg_hnorm} and \eqref{eq:Gbar_piecewise} into \eqref{eq:avg_snr_def} 
    yields \eqref{eq:avg_snr_final}. This completes the proof.
    \hfill $\blacksquare$

    \section{Proof of Proposition \ref{prop:NP_hard_MILP_coverage}} \label{appd:NP_hard_MILP_coverage}
    We prove NP-hardness by a polynomial-time reduction from the
    \emph{Maximum Coverage} (MC) problem, which is NP-hard \cite{khuller1999budgeted}.
    To this end, let us first present the definition of the MC problem.
    Given a universe $\mathcal{U}=\{1,\dots,G\}$ and a collection of subsets
    $\{\mathcal{S}_j\}_{j=1}^{J}$ with $\mathcal{S}_j\subseteq \mathcal{U}$,
    together with an integer budget $k$, the MC problem is given by\footnote{The standard MC problem uses the constraint $|\mathcal{J}|\le k$. For technical convenience, we consider the exact-$k$ variant, i.e., $|\mathcal{J}|=k$, which is also NP-hard. This choice will simplify the subsequent reduction by aligning with the ``one active pinching antenna per waveguide'' structure in our setting.}
    \begin{equation} \label{eq:MC_def_exact}
        \max_{\mathcal{J}\subseteq \{1,\dots,J\},\,|\mathcal{J}|= k}
        \ \bigg|\bigcup_{j\in\mathcal{J}} \mathcal{S}_j\bigg|.
    \end{equation}
    As seen, the MC problem aims to select a subcollection $\mathcal{J}$ with $|\mathcal{J}|\le k$ so that
    the union $\bigcup_{j\in\mathcal{J}}\mathcal{S}_j$ covers as many distinct elements in $\mathcal{U}$ as possible. 
    
    Recall that to prove NP-hardness of \eqref{prob:cov_milp}, it suffices to show that a known NP-hard problem
    reduces to \eqref{prob:cov_milp} in polynomial time, i.e., $\mathrm{MC}\le_p \eqref{prob:cov_milp}$ \cite{arora2009computational}. Here, we write $A \le_p B$ if there exists a polynomial-time computable mapping $f$ from instances of $A$ to instances of $B$ such that the answer to an instance $x$ of $A$ can be obtained from the answer to $f(x)$ of $B$ in polynomial time.
    Next, we show how to map any MC instance $(\mathcal{U},\{\mathcal{S}_j\}_{j=1}^{J},k)$ to a special instance of
    \eqref{prob:cov_milp} in polynomial time. 
    Given any MC instance $(\mathcal{U},\{\mathcal{S}_j\}_{j=1}^{J},k)$, we construct a special instance of
    \eqref{prob:cov_milp} such that (i)  each waveguide activates exactly one candidate (as in our model),
    yet (ii) the number of distinct candidates effectively selected is at most $k$, matching the MC budget.
    Moreover, a grid is ``covered'' in MC if and only if its average SNR reaches the threshold in \eqref{prob:cov_milp}.
    The construction is as follows.
    \begin{itemize}
    \item \emph{Grids $\leftrightarrow$ MC elements:}
    Let the grid index set $\Omega$ contain $|\Omega|=G$ grids, in one-to-one correspondence with the
    elements in $\mathcal{U}$. Throughout, ``grid $i$'' represents element $i\in\mathcal{U}$.
    
    \item \emph{Budget $\leftrightarrow$ number of waveguides:}
    Set the number of waveguides to $N=k$.
    
    \item \emph{Candidates $\leftrightarrow$ MC sets:}
    On each waveguide $n\in\{1,\dots,N\}$, set the number of candidate positions to $M=J$.
    Candidate $m\in\{1,\dots,J\}$ corresponds to the MC set $\mathcal{S}_m$.

   \item \emph{Encoding set membership via SNR contributions:}
    Let $\gamma_{\mathrm{th}}>0$ be the SNR threshold in \eqref{prob:cov_milp}.
    Define the (precomputable) per-grid average-SNR contribution of candidate $(n,m)$ to grid $i\in\Omega$ as\footnote{We note that $\overline{\Gamma}_{n,m}(i)$ is \emph{artificially defined} in the reduction (not derived from a physical channel model) to encode the set-membership relation $i\in\mathcal{S}_m$ via the threshold $\gamma_{\mathrm{th}}$.
}
    \begin{equation}
    \label{eq:NP_coeff_def}
    \overline{\Gamma}_{n,m}(i)
    \triangleq
    \begin{cases}
    \gamma_{\mathrm{th}}, & \text{if } i\in \mathcal{S}_m,\ m\in\{1,\dots,J\},\\
    0, & \text{otherwise},
    \end{cases}
    \end{equation}
    so that activating candidate $m\in\{1,\dots,J\}$ contributes exactly $\gamma_{\mathrm{th}}$ to precisely those grids
    whose indices belong to $\mathcal{S}_m$, and contributes $0$ otherwise.
    Then, under an activation matrix $\mathbf{A}=[a_{n,m}]$, define the per-grid average SNR as
    \begin{equation}
    \label{eq:NP_linear_SNR}
    \overline{\Gamma}(i;\mathbf{A})
    \triangleq
    \sum_{n=1}^{N}\sum_{m=1}^{M} a_{n,m}\,\overline{\Gamma}_{n,m}(i),
    \quad \forall i\in\Omega.
    \end{equation}
    \end{itemize}
    The above construction is computable in time polynomial in $(G,J,k)$.
    Then, with \eqref{eq:NP_linear_SNR}, the MILP \eqref{prob:cov_milp} reduces to
    \begin{subequations}
    \label{eq:NP_MILP_instance}
    \begin{align}
    \max_{\mathbf{A},\{c_i\}} \ & \sum_{i\in\Omega} c_i \label{eq:NP_MILP_instance_a}\\
    \text{s.t.}\quad
    & \overline{\Gamma}(i;\mathbf{A}) \ge \gamma_{\mathrm{th}}\,c_i,\ \forall i\in\Omega, \label{eq:NP_MILP_instance_b}\\
    & \sum_{m=1}^{M} a_{n,m}=1,\ \forall n=1,\dots,N, \label{eq:NP_MILP_instance_c}\\
    & a_{n,m}\in\{0,1\},\ \forall n,m, \label{eq:NP_MILP_instance_d1}\\
    & c_i\in\{0,1\},\ \forall i\in\Omega. \label{eq:NP_MILP_instance_d2}
    \end{align}
    \end{subequations}
     
    Let $\mathrm{OPT}_{\mathrm{MC}}$ and $\mathrm{OPT}_{\mathrm{MILP}}$ denote the optimal objective values of
    \eqref{eq:MC_def_exact} and \eqref{eq:NP_MILP_instance}, respectively. We show that
    $\mathrm{OPT}_{\mathrm{MILP}}=\mathrm{OPT}_{\mathrm{MC}}$.
    
    \smallskip
    \noindent\emph{(i) Any MC solution induces a feasible MILP solution with the same objective.}
    Take any $\mathcal{J}\subseteq\{1,\dots,J\}$ with $|\mathcal{J}|=k$.
    Construct $\mathbf{A}$ by assigning the $k$ selected set indices in $\mathcal{J}$ to the $N=k$ waveguides,
    i.e., for each $n\in\{1,\dots,k\}$ choose one distinct $j_n\in\mathcal{J}$ and set $a_{n,j_n}=1$ (and $a_{n,m}=0$ for $m\neq j_n$),
    so that \eqref{eq:NP_MILP_instance_c}--\eqref{eq:NP_MILP_instance_d1} hold. 
    Define
    \begin{equation}
    \label{eq:c_choice}
    c_i \triangleq \mathbbm{1}\!\bigg\{ i\in \bigcup_{j\in\mathcal{J}} \mathcal{S}_j \bigg\},\quad \forall i\in\Omega.
    \end{equation}
    If $c_i=1$, then there exists $j\in\mathcal{J}$ such that $i\in\mathcal{S}_j$.
    By \eqref{eq:NP_coeff_def} and \eqref{eq:NP_linear_SNR}, at least one selected term equals $\gamma_{\mathrm{th}}$,
    hence $\overline{\Gamma}(i;\mathbf{A})\ge \gamma_{\mathrm{th}}$ and \eqref{eq:NP_MILP_instance_b} holds.
    If $c_i=0$, then \eqref{eq:NP_MILP_instance_b} holds trivially since $\overline{\Gamma}(i;\mathbf{A})\ge 0$.
    Therefore, $(\mathbf{A},\{c_i\})$ is feasible for \eqref{eq:NP_MILP_instance}, and its objective value is
    \begin{align}
    \sum_{i\in\Omega} c_i = \bigg|\bigcup_{j\in\mathcal{J}} \mathcal{S}_j\bigg|.
    \end{align}
    Maximizing over all feasible $\mathcal{J}$ yields $\mathrm{OPT}_{\mathrm{MILP}}\ge \mathrm{OPT}_{\mathrm{MC}}$.
    
    \smallskip
    \noindent\emph{(ii) Any feasible MILP solution induces an MC solution with no smaller objective.}
    Conversely, take any feasible $(\mathbf{A},\{c_i\})$ of \eqref{eq:NP_MILP_instance}.
    By \eqref{eq:NP_MILP_instance_c}--\eqref{eq:NP_MILP_instance_d1}, each waveguide selects exactly one candidate.
    Let $\mathcal{J}(\mathbf{A})\subseteq\{1,\dots,J\}$ denote the set of indices selected by at least one waveguide under $\mathbf{A}$.
    Since there are $N=k$ waveguides, we have $|\mathcal{J}(\mathbf{A})|\le k$.
    If $|\mathcal{J}(\mathbf{A})|<k$, we can add arbitrary extra indices from $\{1,\dots,J\}\setminus \mathcal{J}(\mathbf{A})$
    to obtain a set $\widehat{\mathcal{J}}(\mathbf{A})$ with $|\widehat{\mathcal{J}}(\mathbf{A})|=k$.
    Clearly,
    $\bigcup_{j\in\widehat{\mathcal{J}}(\mathbf{A})}\mathcal{S}_j \supseteq \bigcup_{j\in\mathcal{J}(\mathbf{A})}\mathcal{S}_j$,
    so $\widehat{\mathcal{J}}(\mathbf{A})$ is feasible for \eqref{eq:MC_def_exact} and can only increase (or keep) the union size.
    
    Now, consider any grid $i$ with $c_i=1$. Constraint \eqref{eq:NP_MILP_instance_b} implies
    $\overline{\Gamma}(i;\mathbf{A})\ge \gamma_{\mathrm{th}}$.
    By \eqref{eq:NP_linear_SNR}--\eqref{eq:NP_coeff_def}, this is possible only if there exists at least one selected candidate
    $m\in\mathcal{J}(\mathbf{A})$ such that $i\in\mathcal{S}_m$.
    Hence,
    \begin{equation}
    \label{eq:covered_inclusion}
    \{i\in\Omega:\ c_i=1\}
    \subseteq
    \bigcup_{m\in\mathcal{J}(\mathbf{A})}\mathcal{S}_m
    \subseteq
    \bigcup_{j\in\widehat{\mathcal{J}}(\mathbf{A})}\mathcal{S}_j,
    \end{equation}
    which yields
    \begin{align}
    \sum_{i\in\Omega} c_i \le \bigg|\bigcup_{j\in\widehat{\mathcal{J}}(\mathbf{A})}\mathcal{S}_j\bigg|.
    \end{align}
    Therefore, the MC solution $\widehat{\mathcal{J}}(\mathbf{A})$ achieves an objective value no smaller than the MILP objective.
    Taking the maximum over all MILP feasible solutions gives $\mathrm{OPT}_{\mathrm{MC}}\ge \mathrm{OPT}_{\mathrm{MILP}}$.
    
    Finally, combining (i) and (ii), we obtain $\mathrm{OPT}_{\mathrm{MILP}}=\mathrm{OPT}_{\mathrm{MC}}$ for the constructed instance, and the construction and mappings are polynomial-time. Therefore, if \eqref{prob:cov_milp} could be solved in polynomial time, then MC could also be solved in polynomial time. Since MC is NP-hard, \eqref{prob:cov_milp} is NP-hard
    (even restricted to instances of the form \eqref{eq:NP_MILP_instance}), and hence \eqref{prob:cov_milp} is NP-hard in general.
    This completes the proof.
    \hfill $\blacksquare$
 
\end{appendices}



\end{document}